\documentclass{IEEEtaes}

\usepackage{color,array,amsthm}
\usepackage{graphicx}

\usepackage{subcaption}
\usepackage[utf8]{inputenc}
\usepackage{cite}
\usepackage{amsmath,amssymb,amsfonts}
\usepackage{steinmetz}
\usepackage{algorithmicx}
\usepackage{algorithm}
\usepackage{algpseudocode}
\usepackage{graphicx}
\usepackage{textcomp}
\usepackage{algpseudocode} 
\usepackage{mathtools}
\usepackage{dsfont}
\usepackage[dvipsnames]{xcolor}
\usepackage{nicefrac}
\usepackage{txfonts}
\usepackage{fdsymbol}
\usepackage{mdframed}
\usepackage{todonotes}
\usepackage{tikz}
\usepackage{eso-pic} 

\usepackage{cleveref}
\DeclareMathOperator*{\argmin}{arg\,min}

\DeclareMathOperator{\EX}{\mathbb{E}}
\DeclareMathOperator{\atan2}{atan2}
\newcommand{\rt}{\rho_{_{\rm T}}}
\newcommand{\ra}{\rho_{_{\rm A}}}
\newcommand{\ta}{\theta_{\rm A}}

\newcommand{\tao}{\theta_{\rm A0}}

\newcommand{\xd}{\mathbf{x}_{\rm D}}
\newcommand{\xa}{\mathbf{x}_{\rm A}}
\newcommand{\xdo}{\mathbf{x}_{\rm D0}}
\newcommand{\xao}{\mathbf{x}_{\rm A0}}
\newcommand{\xc}{\mathbf{x}_{\rm C}}
\newcommand{\thetae}{\theta_{\rm eng}}
\newcommand{\taue}{\tau_{\rm eng}}
\newcommand{\phie}{\phi_{\rm eng}}
\newcommand{\reng}{r_{\rm eng}}
\newcommand{\p}{\textbf{p}}
\newcommand{\Se}{S_{\rm engage}}
\newcommand{\R}{\mathbb{R}^2}
\newcommand{\uvec}{\hat{\mathbf{u}}}
\newcommand{\x}{\mathbf{x}}
\newcommand{\ro}{r_{_{\rm T}}}

\newcommand{\q}{\textbf{q}}

\newcommand{\xcap}{\x_{\rm capture}}
\newcommand{\lcap}{\Lambda_{\rm capture}}
\newcommand{\lscap}{\Lambda^*_{\rm capture}}
\newtheorem{lemma}{Lemma}

\newtheorem{definition}{Definition}

\newtheorem{remark}{Remark}
\newtheorem{assumption}{Assumption}
\newcommand{\re}{\color{red}}
\newcommand{\dd}{\mathrm{d}}
\setcounter{page}{1}

\makeatletter
\def\ps@IEEEtitlepagestyle{\ps@headings}



\makeatother
\begin{document}

\title{Multi-Attacker Single-Defender Target Defense in Conical Environments}

\author{Arman Pourghorban}
\affil{University of North Carolina at Charlotte, NC, 28223, USA} 

\author{Dipankar Maity}
\member{Senior Member, IEEE}
\affil{University of North Carolina at Charlotte, NC, 28223, USA}



\corresp{
}

\authoraddress{Arman Pourghorban is with the University of North Carolina at Charlotte, Charlotte, NC 28223 USA 
(e-mail: \href{mailto:apourgho@charlotte.edu}{apourgho@charlotte.edu}). Dipankar Maity is with the University of North Carolina at Charlotte, Charlotte, NC 28223 USA 
(e-mail: \href{mailto:dmaity@charlotte.edu}{dmaity@charlotte.edu}).}

\maketitle
\makeatletter
\def\@copyrightspace{}%
\def\@IEEEpubid{}%
\def\@IEEEcopyright{}%
\makeatother
\makeatletter
\def\ps@headings{%
   \def\@oddhead{}%
   \def\@evenhead{}%
   \def\@oddfoot{}%
   \def\@evenfoot{}%
}
\def\@IEEEpubid{}
\pagestyle{headings}
\thispagestyle{headings}
\makeatother

\begin{abstract} We consider a variant of the target defense
problem in a planar conical environment where a single defender is tasked to capture a
sequence of incoming attackers.  The attackers’ objective is
to breach the target boundary without being captured by
the defender. As soon as the current attacker breaches the
target or gets captured by the defender, the next attacker appears at the boundary of the environment and moves radially toward the target with maximum speed.  Therefore, the defender’s final location at the end
of the current game becomes its initial location for the next
game. The attackers pick strategies that are advantageous
for the current as well as for future engagements between the defender and the remaining attackers. 
The attackers  have their own sensors with limited range, using which they can perfectly detect if the defender is within their sensing range. 
 We derive
equilibrium strategies for all the players to optimize the
capture percentage using the notions of \textit{capture distribution}.  Finally, the theoretical results are verified through numerical examples using Monte-Carlo type random trials of
experiments.
\end{abstract}

\begin{IEEEkeywords}
Target-defense, Pursuit-evasion, Partial information games, Reach-avoid games, Multi-agent systems.
\end{IEEEkeywords}

\section{INTRODUCTION}
T{\scshape arget} defense  games in conical environments are critical in strategic defense due to their prevalence in various natural and man-made settings, such as mountainous regions, valleys, and specific architectural structures.  These environments present unique challenges and opportunities for defense and have found applications in surveillance \cite{lewin1979conic}, patrolling \cite{melikyan1991simple}, and guarding \cite{bajaj2024multi}. The emergence of guarding a target in real-world applications
has motivated a large amount of research in this area \cite{lee2022vision,ho2022game,eloycoop,garcia2020barrier,s2020airport,hoorfar2023securing}. Notably, several of these works explicitly incorporate environments shaped by conic sections \cite{bajaj2022competitive}, highlighting the practical and theoretical significance of this geometric structure in real-world defense applications.

We study a perimeter-defense game between a defender and an attacker team in a conical environment. The attacker team sends its team-members sequentially to breach the target. On the other hand, a \textit{single} defender is tasked to guard a target by capturing the attackers. 
Perimeter defense games, first introduced in \cite{isaacs1999differential}, are a variant of pursuit-evasion problems in which a team of defenders is tasked with capturing attackers before they breach a protected perimeter \cite{velhal2022decentralized}. Much of the existing literature addresses this problem using the Hamilton-Jacobi-Isaacs (HJI) framework or as a one-sided optimal control problem \cite{lee2021guarding, chen2016multiplayer, garcia2019strategies, chen2014path}. However, due to the curse of dimensionality, these approaches are limited to low-dimensional settings and relatively simple objectives.

In contrast, our game involves a sequence of agents engaging with a single opponent, a structure that is not readily amenable to traditional HJI or optimal control formulations, as we will explain soon. 
While some alternative methods—such as dimensionality reduction techniques \cite{festa2013decomposition}, Voronoi partitioning \cite{selvakumar2016evasion}, and reachable set analysis \cite{sun2017pursuit}—have been proposed to mitigate scalability and handle more complex objectives, they still do not address the key challenges posed by our setting.

In this work, we leverage key properties of the Apollonius Circle, particularly the recently developed robust pursuit strategies \cite{dorothy2021one}, and extend them to account for information asymmetry and sensing limitations in our setting.
\par
 \subsection{Key Related Works}
A prevalent assumption evident in the previously cited works and the majority of the literature \cite{khrenov2021geometric,deng2023multiple,wang2024target} pertains to the availability of full-state information. Specifically, it is assumed that all agents participating in the game possess access to the states of all other agents. This assumption holds pivotal significance as it facilitates the application of differential-game techniques for the derivation of equilibrium strategies. However, the acquisition of full-state information proves unfeasible in numerous realistic security scenarios and one of the key challenges to our problem.
\par
Of particular relevance to this paper is the work in \cite{bajaj2022competitive}, which considers a scenario where attackers appear randomly at arbitrary time instants along the boundary of a conical environment.
While our problem shares the feature of random attacker appearances on the boundary, it differs in that the attackers arrive sequentially over time in our problem rather than in a purely random fashion as in \cite{bajaj2022competitive}. This sequential structure enables the design of strategic solutions with analytical guarantees, whereas deriving optimality guarantees for random arrival times remains intractable.

To address this intractability, \cite{bajaj2022competitive} adopts a competitive analysis framework, evaluating the performance of various online defender algorithms against arbitrary input sequences, relative to an optimal offline algorithm with full knowledge of future events. They design and analyze three online algorithms and identify parameter regimes in which these algorithms achieve finite competitive ratios—providing some measure of performance guarantees. However, their work does not consider any sensing capabilities for the attackers, effectively limiting the complexity of evasive strategies that attackers can employ.

Similarly, works such as \cite{adler2022rolenew} and \cite{macharet2020adaptive} assume simple, fixed a priori strategies for attackers attempting to breach a target perimeter. These models also neglect attacker sensing capabilities, precluding evasive behavior and reducing the problem to straightforward interception rather than dynamic pursuit-evasion.
\par

In our prior work we consider a similar problem for circular environment \cite{pourghorban2022target} and also for periodic arrivals \cite{pourghorban2023target}. 
However, as it turns out, addressing the problem for a conical segment has unique challenges and opportunities, specially due to the side boundaries of the conic being impenetrable and treated as obstacles. 
Thus, in the limit when the cone angle becomes 180$^o$, the cone does not become a circle and the agents have to go all the way around. 
Consequently, majority of the tools developed in our prior work requires re-investigation and a new analysis.  

In our prior work \cite{pourghorban2022target}, we studied a target-defense problem in a circular environment where a sequence of incoming attackers is tasked to breach a circular target while a single defender is tasked to guard the target. We analytically quantified the capture percentage for both finite and infinite sequences of incoming attackers. In \cite{pourghorban2023targetnew}, which was a natural extension of \cite{pourghorban2022target}, we studied
the same problem with the condition where each attacker knows
the entry point of the last attacker and this information is used
to find an optimal entry point for the next attacker. We extended this work \cite{pourghorban2023target} by considering a sequence of periodically incoming attackers. We proposed three algorithmic approaches as the defender's strategy and analytically computed a lower bound on one of the proposed strategies. 
However, all of these works considered a circular environment, which does not directly solve the problem when restricted to a conical environment.
\subsection{Main Contributions}
 The agents (defender and the attackers) have limited sensing regions. Therefore, they may not have information about their opponents all the time. This sensing limitation enforces them to
consider both \textit{positional} and \textit{sensing} advantages while
choosing their maneuvering strategies. We study
this problem by decomposing each one-vs-one attacker-defender game into
two phases: \textit{partial} and \textit{full information phases}. Each one vs-one game starts with the \textit{partial information phase} and
then either it proceeds to the \textit{full information phase} or the
game terminates in the breach of the target. The objective
of the defender (attackers) is to maximize (minimize) the percentage of captured attackers.
\par
 The main contributions of this paper are as follows:
 \begin{itemize}
     \item We formulate a sequential target-defense game in a conical environment where each attacker has limited sensing and acts based on partial information, and the defender must account for long-term outcomes across engagements.
     \item We derive the optimal strategies for all the agents by studying a max-min Stackelberg game.
     \item We present analytical performance bounds for the game by introducing an  approximation of the agents' optimal strategy.
 \end{itemize}
 \subsection{Organization}
The remainder of this paper is organized as follows. 
Section~\ref{sec:Problemformulation} formulates the problem, states the parametric assumptions, and presents useful definitions along with necessary background material. 
Section~\ref{sec:GameAnalysis} provides the game analysis, including the description of the problem phases. 
Section~\ref{sec:OptEng} derives the optimal engagement configurations for the agents. 
Section~\ref{sec:capprob} analyzes the capture probability. 
Section~\ref{sec:GameProgress} discusses the game progression across multiple engagements. 
Section~\ref{sec:bound} establishes analytical bounds on the defender’s performance. 
Section~\ref{sec:Simu} presents simulation and numerical results validating the theoretical developments. 
Finally, Section~\ref{sec:Conclusion} concludes the paper.


\par
\textit{Notation:} 
All vectors are denoted with lowercase bold symbols, e.g., $\mathbf{x}$. 
We use $\uvec(\theta)$ to denote the unit vector $[\cos\theta,~\sin\theta]^\intercal$.
For a non-zero vector $\x$, we use $\hat{\x}$ to denote the unit vector along $\x$, and use $\angle \x$ to denote the angle $\atan2(\x[2],\,\x[1])$, where $\x[i]$ denotes the $i$-th component of vector $\x$.


\section{Problem Statement} \label{sec:Problemformulation}
We consider the problem of guarding a convex conical target region. 
Let $\mathcal{R}_T$ $= \{\x \in  \R   \mid \| \x\| \le \ro, \ | \angle \x | \le \Phi\}$ denote the target region, where $\ro$ and $\Phi$ denote the radius and angular span of the conical target.
The target is equipped with a sensing annulus of radius $\rt$, as shown in Fig.~\ref{fig:introPic}. We will refer to this region as the Target Sensing Region (TSR).
Together the target and the TSR form a conical game environment $\mathcal{R}_E$ $= \{\x \in  \R  \mid \| \x\| \le \ro+\rt, \ | \angle \x | \le \Phi\}$.
The two side edges of the conical environment are considered to be impenetrable boundaries, and the attackers can neither enter nor exit through them. 

\begin{figure}
    \centering
    \includegraphics[width = 0.25 \textwidth]{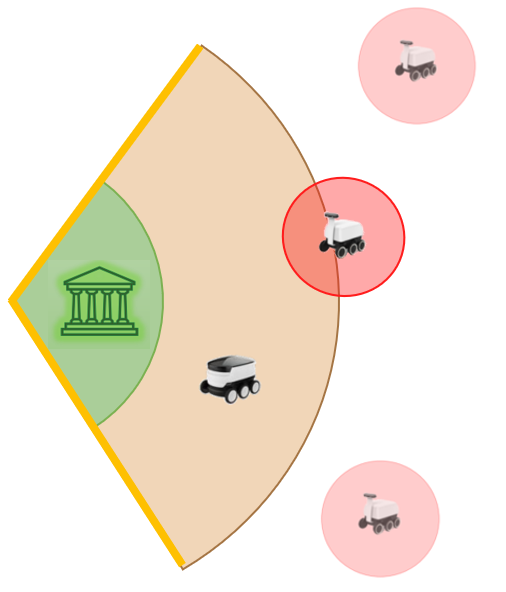}
    \put (-35, 50 )  {defender}
    \put (-55, 50 ) {$\longleftarrow$}
    \put (-5, 85 )  {attacker}
    \put (-22, 85 ) {$\longleftarrow$}
       \put (-84, 70 ) {\line(1,0) {44}}
    \put (-75, 75) {$\rt$}
              \put (-18, 128 ) 
           {\textcolor{black}{{\line(1,0) {10.5}}}}
 \put (-4, 128) {$\ra$}
           \put (-99, 39 ) 
           {\textcolor{black}{{\line(-0.65,1) {20.7}}}}
 \footnotesize{\put (-120, 50) {$\ro$}}

           \caption{\colorbox{green!30}{\textcolor{green!30}{g}}: Target region of radius $\ro$, \colorbox{red!20}{\textcolor{red!20}{g}}: attackers sensing region of radius $\ra$, \colorbox{orange!60!black!25}{\textcolor{orange!60!black!25}{g}}: Target Sensing Region (TSR) of radius $\rt$, \textcolor{orange!50!black!50}{\textbf{\Large --}} TSR Boundary,\textcolor{orange!90!black!40}{\textbf{\Large --}} Impenetrable boundary.}
    \label{fig:introPic}
    \vspace{-12 pt}
\end{figure}

The target is defended by a single agent, hereafter called as the \textit{defender}, which can sense any attacker that is present within the TSR. 
The TSR is an `early warning system' that enables the defender to perfectly measure the location of any attacker within the TSR. 
The defender cannot sense any attacker outside of the TSR.
Therefore, the first moment the defender can sense an attacker is when the attacker appears on the TSR boundary.
We consider a random arrival pattern for the attackers, i.e., they do not appear from any fixed set of points on the TSR boundary. 
We assume that each attacker appears on the TSR boundary with a uniform random probability independent of the former arrivals, similar to the scenarios considered in prior works \cite{velhal2022decentralized, pourghorban2022target}.
This arrival pattern has the maximum \textit{entropy} (i.e., uncertainty). 

The game is \textit{sequential} in nature \cite{pourghorban2022target}, meaning only one attacker enters the TSR at any given time and the next attacker enters the TSR as soon as the current one breaches the target boundary or gets captured by the defender. 
This makes the problem challenging since the defender's final location in the current game becomes its initial location for the next game. 
Therefore, while engaging with the current attacker, the defender must consider the consequences of this engagement on all future arrivals of the attackers. 
We will consider both finite and infinite number of arrivals.


Each engagement may terminate in either a \textit{capture} of the attacker, or a \textit{breach} of the target, or an \textit{evasion} of the attacker.
Let $\xa(t),~\xd(t) \in \R$ denote the positions of a representative attacker and the defender at time $t$. 
In this work, we assume that $\xd(t) \in \mathcal{R}_E$ for all~$t$.

We say the attacker is \textit{captured} if there exists a time $t_f$ such that 
\begin{align*}
    \xa(t_f) \in \mathcal{R}_E, ~{\rm and} \quad  \|\xd(t_f) -\xa(t_f) \|=0.
\end{align*}
Likewise, a \textit{breach} happens if there exists a $t_f$ such that 
\begin{align*}
    \xa(t_f) \in \mathcal{R}_T, ~{\rm and}\quad  \|\xd(t_f) -\xa(t_f) \| > 0,
\end{align*}
and an \textit{evasion} happens if there exists a $t_f$ such that the attacker gets out of the TSR uncaptured, i.e., 
\begin{align*}
    \xa(t_f) \notin \mathcal{R}_E, ~{\rm and} \quad \|\xd(t) -\xa(t) \| > 0, ~~~\forall~t \le t_f.
\end{align*}

Let $S(N)$ denote the total number of attackers captured when a total of $N$ attackers have completed the game. 
Note that, due to the random arrival locations of the attackers and/or possibly due to any randomized motion tactics employed by the players, the number of captures (i.e., $S(N)$) is a random variable. 
 The defender's objective is to maximize the expected capture fraction
\[
 J \triangleq \frac{\EX [S(N)]}{N},
\]
 where the expectation is taken with respect to the randomness in the arrival patterns and the randomization (if any) in the agent's motion tactics.
For the case of an infinite number of attacker arrivals, we consider the asymptotic expected capture fraction
\[
 J_\infty \triangleq \liminf_{N \to \infty} \frac{\EX [S(N)]}{N}.
\]
This is a zero-sum game, and therefore, the attackers' objective is to minimize $J$ (or $J_\infty$).

\subsection{Agents' Dynamics and Capabilities}
The defender and the attackers are assumed to have first-order dynamics \cite{khrenov2021geometric,garcia2019strategies,lee2021guarding}, i.e., 
\begin{align} \label{eq:dyn}
    \dot{\mathbf{x}}_{\rm A} = v_{\rm A} \uvec(\psi_{\rm A}), \qquad \dot{\mathbf{x}}_{\rm D} = v_{\rm D} \uvec(\psi_{\rm D}),
\end{align}
where the defender (respectively, the attacker) directly controls its speed and heading angle by selecting $v_{\rm D}$ and $\psi_{\rm D}$ ($v_{\rm A}$ and $\psi_{\rm A}$), respectively.
Without any loss of generality, we assume that the defender and the attackers
have the speed limit of 1 and $\nu$, respectively, i.e., $|v_{\rm D}(t)|\le 1$ and $|v_{\rm A}(t)|\le \nu$ for all $t$.\footnote{
In case the maximum speeds are $v_{D,\max}$ and $v_{A,\max}$, we shall use scaling and time dilation to transform the system to obtain maximum speeds $1$ and $\nu$, respectively.  
To that end, we define $\nu =v_{A,\max}/v_{D,\max}$, $c =v_{D,\max}$, $\bar \x_i(t) = \x_i( t/c )$  for $i=A, D$. 
With this change of variables, we obtain $\dot{\bar{\x}}_i = \bar{\mathbf{v}}_i $ with  $\|\bar{\mathbf{v}}_D(t)\| \le 1$ and $\|\bar{\mathbf{v}}_A(t)\| \le \nu$.
} 
Furthermore, to avoid a trivial scenario we assume that $\nu < 1 $, i.e., the defender is faster.
\par
Each attacker is equipped with a sensor and is able to sense the defender only if the defender is within a distance of $\ra$ or less, i.e., the attackers have a circular sensing footprint of radius $\ra$; see Fig.~\ref{fig:introPic}.
After detecting the defender, an attacker can find the best breaching point on the target, or decide to get out of the TSR uncaptured if breaching is impossible but escape is possible, or evade for some time before getting captured by the defender when both breaching and evasion is impossible. 
 This evasive tactic is an important aspect for this game since it forces the defender to pursue and capture the attacker at a location that is likely to be unfavorable for the defender against the next attacker.

 The attacker may not sense the defender at the moment it appears on the TSR boundary. 
 In such cases, the attacker moves radially toward the target center until they sense the defender or breaches the target; i.e., the attacker takes the shortest path to the target in absence of any knowledge about the defender's location.


\subsection{Parametric Assumption}
   The overall outcome of
this game depends on the game
parameters $\ro,\ra $, $\rt $, $\nu $ and $\Phi$. 
In this paper we consider the following assumptions. 


\begin{assumption} \label{assm:deadlock}
    The parameters satisfy
    \begin{align}
        \frac{\rt}{\ra} \ge 1+\frac{2\nu}{(1-\nu^2)}.
    \label{eq:Asm2}
\end{align}
\end{assumption}
Prior work, such as \cite{shishika2021partial}, has shown that $\frac{\rt}{\ra} \ge \frac{2\nu}{(1-\nu^2)}$ is necessary to ensure that capture is a possible event in a circular environment. 
In other words, $\frac{\rt}{\ra} < \frac{2\nu}{(1-\nu^2)}$ is sufficient for the attackers to breach or evade, and consequently, it may result in a trivial game with $J = J_\infty = 0$. 
Here we assume a slightly stronger condition than $\frac{\rt}{\ra} \ge \frac{2\nu}{(1-\nu^2)}$ in \eqref{eq:Asm2} (i.e., we include the $+1$) to have a \textit{sufficient} condition that not only excludes the trivial scenario mentioned earlier, but also excludes certain \textit{deadlock} scenarios; see Lemma~\ref{lemma:sufficiency} for details.


\begin{assumption} \label{assm:small target}
    The parameters satisfy $\nu \ro\leq \rt$.
\end{assumption}
This assumption ensures that the defender can capture any attacker from the target center. 
We make this assumption only to simplify the exposition of this paper.

\subsection{Apollonius Circle}\label{sec:Apcircle}
\textit{Apollonius Circle} (AC) plays an important role in analyzing games with  dynamics \eqref{eq:dyn}. 
Given the locations of the defender and the attacker at time $t$, the AC contains all the points the attacker can reach before the defender gets there.
The set of all such points is a circular region  with center $\xc(t)$ and radius $r_{\rm C}(t)$ given by
\begin{align} \label{eq:AC}
    \xc(t) = \alpha\xa(t) - \beta \xd(t), \quad r_{\rm C}(t) = \gamma\|\xa(t) - \xd(t)\|,
\end{align}
where 
\begin{align} \label{eq:alpha_beta_gamma}
    \alpha = (1-\nu^2)^{-1}, \quad \gamma = \nu \alpha,\quad \beta = \nu\gamma.
\end{align}
From the definition of AC, one may immediately notice that any point inside the AC is reachable by the attacker without getting captured.
Recently, \cite{dorothy2021one} provided a formal proof that the attacker cannot get out of the AC without getting captured. 
The following lemma is the key result from \cite{dorothy2021one}, which will play an instrumental role in analyzing our game.

\begin{lemma}[\!\!\cite{dorothy2021one} Theorem 1] \label{lem:arXiv}
Let $\xd(t_0)$ and $\xa(t_0)$ denote the positions of the agents at a specific time $t_0$. Then, for any $\epsilon>0$, the defender will capture the attacker within a distance of $\epsilon$ from the \textit{Apollonius circle} constructed at time $t_0$ by following the strategy
\begin{align} \label{eq:pursuit}
    &v_{\rm D}(t) = 1, \nonumber \\
    &\uvec(\psi_{\rm D}(t)) = \frac{ \left(r_C(t_0) + \epsilon - r_C(t)\right) \hat{\x}_{AD}(t) + \nu \mathbf{y}(t) }{\| \left(r_C(t_0) + \epsilon - r_C(t)\right) \hat{\x}_{AD}(t) + \nu \mathbf{y}(t) \|}, 
\end{align}
for all $ t\ge t_0$, where $\hat{\x}_{AD}(t)$ is the unit vector in the direction $\xa(t)-\xd(t)$, and $\mathbf{y}(t) = \xc(t) - \xc(t_0)$.
\hfill $\triangle$
\end{lemma}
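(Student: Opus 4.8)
The plan is to reconstruct the argument behind \cite{dorothy2021one}. I track two scalars along the closed-loop trajectory generated by \eqref{eq:pursuit}: the drift of the Apollonius center $\mathbf{y}(t)=\xc(t)-\xc(t_0)$ and the Apollonius radius $r_C(t)$. I aim to show (i) the \emph{nesting invariant} $\|\mathbf{y}(t)\|+r_C(t)\le r_C(t_0)+\epsilon$ for all $t\ge t_0$, i.e., the current Apollonius disk $B(\xc(t),r_C(t))$ stays inside the $\epsilon$-dilation of the one constructed at $t_0$; and (ii) the pair is driven to coincidence, $\|\xa(t)-\xd(t)\|=r_C(t)/\gamma\to 0$, in finite time $t_f$. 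Granting both, note from \eqref{eq:alpha_beta_gamma} that $\alpha-\beta=1$, so at $t_f$, writing $\xa(t_f)=\xd(t_f)=:\x^\ast$, we get $\xc(t_f)=(\alpha-\beta)\x^\ast=\x^\ast$ and $r_C(t_f)=0$; substituting into (i) yields $\|\x^\ast-\xc(t_0)\|\le r_C(t_0)+\epsilon$, which is exactly the asserted bound: the capture point lies within distance $\epsilon$ of the Apollonius circle of time $t_0$.

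For (i) I would run a viability (barrier) argument on the closed set $\mathcal{S}=\{\,\|\mathbf{y}\|+r_C\le r_C(t_0)+\epsilon\,\}$, which contains the state at $t_0$ with strict slack $\epsilon$. From \eqref{eq:dyn} with $v_{\rm D}\equiv 1$ one has $\dot{\mathbf{y}}=\dot{\xc}=\alpha v_{\rm A}\uvec(\psi_{\rm A})-\beta\uvec(\psi_{\rm D})$ and $\dot r_C=\gamma\,\hat{\x}_{AD}\cdot(v_{\rm A}\uvec(\psi_{\rm A})-\uvec(\psi_{\rm D}))$, hence, away from $\mathbf{y}=0$,
\[
\tfrac{d}{dt}\big(\|\mathbf{y}\|+r_C\big)=v_{\rm A}\uvec(\psi_{\rm A})\cdot\big(\alpha\hat{\mathbf{y}}+\gamma\hat{\x}_{AD}\big)-\uvec(\psi_{\rm D})\cdot\big(\beta\hat{\mathbf{y}}+\gamma\hat{\x}_{AD}\big).
\]
On the boundary $\partial\mathcal{S}$ one has $r_C(t_0)+\epsilon-r_C=\|\mathbf{y}\|$, so \eqref{eq:pursuit} collapses to $\uvec(\psi_{\rm D})=(\hat{\x}_{AD}+\nu\hat{\mathbf{y}})/c$ with $c^2=1+2\nu p+\nu^2$ and $p=\hat{\x}_{AD}\cdot\hat{\mathbf{y}}$. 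Bounding the first (attacker-controlled) term by $\nu\|\alpha\hat{\mathbf{y}}+\gamma\hat{\x}_{AD}\|=\nu\alpha c$ and evaluating the second as $(\gamma+2\beta p+\nu\beta)/c$, the identities $\gamma=\nu\alpha$ and $\beta=\nu\gamma$ give $\nu\alpha c^2=\gamma+2\beta p+\nu\beta$, so the two terms cancel and $\tfrac{d}{dt}(\|\mathbf{y}\|+r_C)\le 0$ on $\partial\mathcal{S}$ for \emph{every} admissible attacker input. Since $\partial\mathcal{S}$ lies at distance $\ge\epsilon>0$ from $\mathbf{y}=0$, the integrand is locally Lipschitz there, and a standard Nagumo-type (first-exit-time) argument shows $\mathcal{S}$ is forward invariant up to $t_f$.

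The technically involved part is (ii), finite-time capture, which does \emph{not} follow from monotonicity of $r_C$: when $\nu$ is close to $1$ and the current Apollonius disk is near tangency with the dilated circle $\mathcal{C}_0^\epsilon:=\partial B(\xc(t_0),r_C(t_0)+\epsilon)$, a suitably fleeing attacker can momentarily force $\dot r_C>0$ (compensated, by (i), by a decrease of $\|\mathbf{y}\|$). Following \cite{dorothy2021one}, I would instead track the defender's progress toward the moving internal-tangency point $q(t)=\xc(t_0)+(r_C(t_0)+\epsilon)\hat{\mathbf{y}}(t)$ of the current Apollonius disk with $\mathcal{C}_0^\epsilon$: a short computation using $\alpha-\beta=1$ shows that at tangency the heading \eqref{eq:pursuit} points exactly from $\xd(t)$ toward $q(t)$, and the $\epsilon$-slack prevents $q(t)$ from receding faster than the unit-speed defender in that direction, which one quantifies into a uniform positive closing rate and hence a finite $t_f$. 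Throughout, \eqref{eq:pursuit} is well posed: on $\mathcal{S}$ the numerator obeys $\|(r_C(t_0)+\epsilon-r_C)\hat{\x}_{AD}+\nu\mathbf{y}\|^2\ge\big((r_C(t_0)+\epsilon-r_C)-\nu\|\mathbf{y}\|\big)^2\ge(1-\nu)^2\epsilon^2>0$, and the feedback is locally Lipschitz in the state away from capture, giving a unique trajectory on $[t_0,t_f]$. The main obstacle is making this closing-rate estimate rigorous; by contrast, the nesting invariant (i) and the identity $\alpha-\beta=1$ are the clean parts that carry out the geometric bookkeeping.
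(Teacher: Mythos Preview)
The paper does not contain a proof of this lemma: it is quoted verbatim as Theorem~1 of \cite{dorothy2021one} and closed with the $\triangle$ marker, after which the text moves directly to Fig.~\ref{fig:arXivlemma} and the subsequent remark. There is therefore no in-paper argument to compare your proposal against; what you have written is a reconstruction of the cited source rather than of anything in the present manuscript.

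On the substance of your sketch, part~(i) is clean and correct: on $\partial\mathcal{S}$ the attacker term is bounded by $\nu\|\alpha\hat{\mathbf{y}}+\gamma\hat{\x}_{AD}\|=\nu\alpha c=\gamma c$, while the defender term evaluates to $(\gamma+2\beta p+\nu\beta)/c=\gamma c^2/c=\gamma c$, so the cancellation goes through exactly as you say. Part~(ii) is, as you acknowledge, only a roadmap; the uniform closing-rate estimate toward the moving tangency point $q(t)$ is precisely the nontrivial content of \cite{dorothy2021one} and would need to be filled in. One small slip in your well-posedness paragraph: the chain $(a-\nu\|\mathbf{y}\|)^2\ge(1-\nu)^2\epsilon^2$ with $a:=r_C(t_0)+\epsilon-r_C$ does not follow from $a\ge\|\mathbf{y}\|$ alone (take $\mathbf{y}=0$ and $a=\epsilon/2$ with $\nu<\tfrac12$). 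What \emph{does} hold on $\mathcal{S}$ is that the numerator can vanish only at the single boundary point $a=\|\mathbf{y}\|=0$, which is not the initial state ($a(t_0)=\epsilon$, $\mathbf{y}(t_0)=0$); combined with the invariance of $\mathcal{S}$ and a continuity argument this still yields well-posedness, but the displayed inequality as written is false.
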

A pictorial representation is provided in Fig.~\ref{fig:arXivlemma}.
\begin{figure}     
    \centering
    \begin{tikzpicture}
        \filldraw[color=black, fill=white, dashed](-1,0) circle (1.55);
        \filldraw[color=black, fill=white](-1,0) circle (1.3);
        \filldraw[color=black, fill=white, dashed](3,0) circle (1.5);
        \filldraw[color=black!80, fill=white!5, very thick](3.4,-0.3) circle (0.85);
         \draw[->, color = black!50!white] (-0.95,-2) -- (-0.95,-1.05);
                  \draw[->, color = black!50!white] (3.25,-1.6) -- (3.33,-0.95);
                \draw[->] (3.0,0) -- (3.33,-0.23);
                \draw[->, color = black!50!white] (3.25,-1.6) -- (3.68,-1.90);
                \draw[->, color = black] (3.25,-1.6) -- (3.7,-1.30);
                \draw[-, color = green!80!blue,dashed] (3.7,-1.30) -- (4.1,-1);              
        \draw [latex reversed-latex reversed, line width=1pt](0.15,0) -- (0.71,0);
        \put (-30, -2.7) {\color{black} $\smallblackcircle$}
        \put (-30, -30) {\color{red} $\smallblackcircle$}
        \put (-30, -58) {\color{blue} $\smallblackcircle$}
        \put (94, -11.5) {\color{black} $\smallblackcircle$}
        \put (92, -27) {\color{red} $\smallblackcircle$}
        \put (89.5, -48) {\color{blue} $\smallblackcircle$}
         \put (10, 3) { \small $\epsilon$}
         \put (-53, -57) {\color{blue} {\small$\xd(t_0)$}}
         \put (-53, -26) {\color{red} {\small$\xa(t_0)$}}
         \put (-53, 0) {\color{black} {\small$\xc(t_0)$}}
         \put (-24, -50) {\small{$\hat{\x}_{AD}(t_0)$}}
         \put (82, -2.7) {\color{black} $\smallblackcircle$}
         \put (100, -11) {\color{black} {\small$\xc(t)$}}
         \put (93, 0) {\color{black} {\scriptsize$\mathbf{y}(t)$}}
         \put (67, -37) {\scriptsize{$\lambda_1\hat{\x}_{AD}(t)$}}
                  \put (81, -56) {\scriptsize{$\lambda_2\mathbf{y}(t)$}}
    \end{tikzpicture}
    \caption{Illustration of the pursuit strategy from Lemma~\ref{lem:arXiv}, where the defender (blue) intercepts the attacker (red) within an $\epsilon$-neighborhood of the Apollonius circle (dashed). The subfigure on the right shows the weighted combination of direction vectors used in the control law to guide the defender toward the interception point.}
    \label{fig:arXivlemma}
\end{figure}
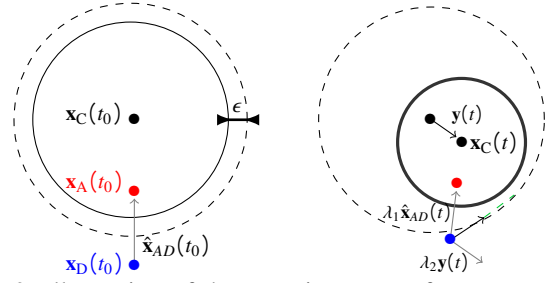

\begin{remark}
    The defender's strategy, as given in \eqref{eq:pursuit}, is in the form of $\lambda_1(t) \hat{\x}_{AD}(t) + \lambda_2(t) \mathbf{y} (t)$, where $\hat{\x}_{AD}$ represents the pure-pursuit element and $\mathbf{y}(t)$ represents the ``guard the critical point'' element. The pure-pursuit aspect aims to minimize the distance between the defender and the attacker, whereas the latter ensures the target is still being guarded while the defender is reducing the distance from the attacker. By effectively balancing these two elements, the defender can secure the capture without needing any prior knowledge of the attacker's motion strategy. In essence, strategy \eqref{eq:pursuit} offers a robust pursuit method that guarantees capture against \textit{all} attacker tactics.
\end{remark}

\section{Game Analysis} \label{sec:GameAnalysis}
As mentioned earlier, the sequential arrival of the attackers results in a sequence of coupled 1-vs-1 games, where the coupling comes from the fact that the defender's terminal location in one game becomes its initial location for the next one.
Each 1-vs-1 game can be divided into two game phases, namely, the \textit{Full information phase} and the \textit{Asymmetric information phase} \cite{pourghorban2022target,pourghorban2023target,pourghorban2023targetnew}. 
In the \textit{Full information phase}, both the defender and the attacker can sense each other i.e., $\xa(t) \in \mathcal{R}_E $ and $\|\xa(t)-\xd(t)\|\le \ra$.
On the other hand,  only the defender can sense the attacker in the \textit{Asymmetric information phase}, i.e., $\xa(t) \in \mathcal{R}_E $ and $\|\xa(t)-\xd(t)\|>\ra$.

The \textit{Asymmetric information phase} is a partial/incomplete information game for the attacker whereas it is a complete information game for the defender. 
The defender shall utilize this information asymmetry to its advantage. 
Note that the asymmetric information phase continues until the defender comes within the attacker's sensing region. 
Since the defender is faster, it can stay out of the evader's sensing region as long as it wants, and consequently, it may elongate the asymmetric information phase at the risk of letting the attacker breach the target. 
A key question that we will address in the remainder of this paper is ``\textit{when} and \textit{where} should the defender first come in contact with the attacker's sensing region?'' 

\begin{remark}
    For some 1-vs-1 games, there might not be an asymmetric information phase
    since the defender is sensed as soon as the attacker appears on the TSR boundary (i.e., the defender is too close to the TSR boundary point from where the attacker is entering). 
    However, we will show later (\Cref{lemma:sufficiency}) that such a scenario is avoided under \Cref{assm:deadlock} and the proposed optimal strategy of the defender. 
\end{remark}

Let $t$ be the time when an attacker first senses the defender, i.e., the full information phase starts. 
Consequently, at this moment, we have $\|\xa(t)-\xd(t)\| = \ra$, which we may rewrite as 
\begin{align} \label{eq:engagementNecessaryCondition}
\xd(t) - \xa(t) = \ra \uvec(\theta),
\end{align}
for some $\theta \in [0, 2\pi)$.
Therefore, the AC constructed at this time will have the center and radius of 
\begin{align}\label{eq:AC2}
\begin{split}
    \xc(t) &= \alpha \xa(t) - \beta \xd(t) = \xa(t) - \beta\ra \uvec(\theta), \\
    r_C(t) &= \gamma \| \xa(t) - \xd(t)\| = \gamma\ra,
    \end{split}
\end{align}
where we have used the identity $\alpha - \beta = 1$; see \eqref{eq:alpha_beta_gamma}. 

From the properties of the AC, we may conclude the outcome of this engagement as follows:
\begin{itemize}
    \item If the AC has an intersection with the target, then \textit{target breach} is inevitable and the attacker must pick a breach point and move directly toward it; see \Cref{fig:gameOutcome}(a).
    \item If the AC intersects the boundary of the TSR, then by Assumption \ref{assm:deadlock}, it cannot simultaneously intersect the target region. In this case, \textit{evasion} is inevitable, and the attacker must select an evasion point on the TSR boundary and move directly toward it; see \Cref{fig:gameOutcome}(b).
    \item Otherwise, if the AC neither has an intersection with the target nor with the outside of the TSR, then \textit{capture} is guaranteed by \Cref{lem:arXiv}; see \Cref{fig:gameOutcome}(c). 
    However, the attacker gets to decide the point on the AC where it will get captured. This leads to a Stackelberg games \cite{von2010market}.
\end{itemize}

\begin{figure}
    \centering
    \begin{subfigure}[b]{0.3\linewidth}
    \centering
        \includegraphics[trim = 150 190 880 130, clip, width= \linewidth]{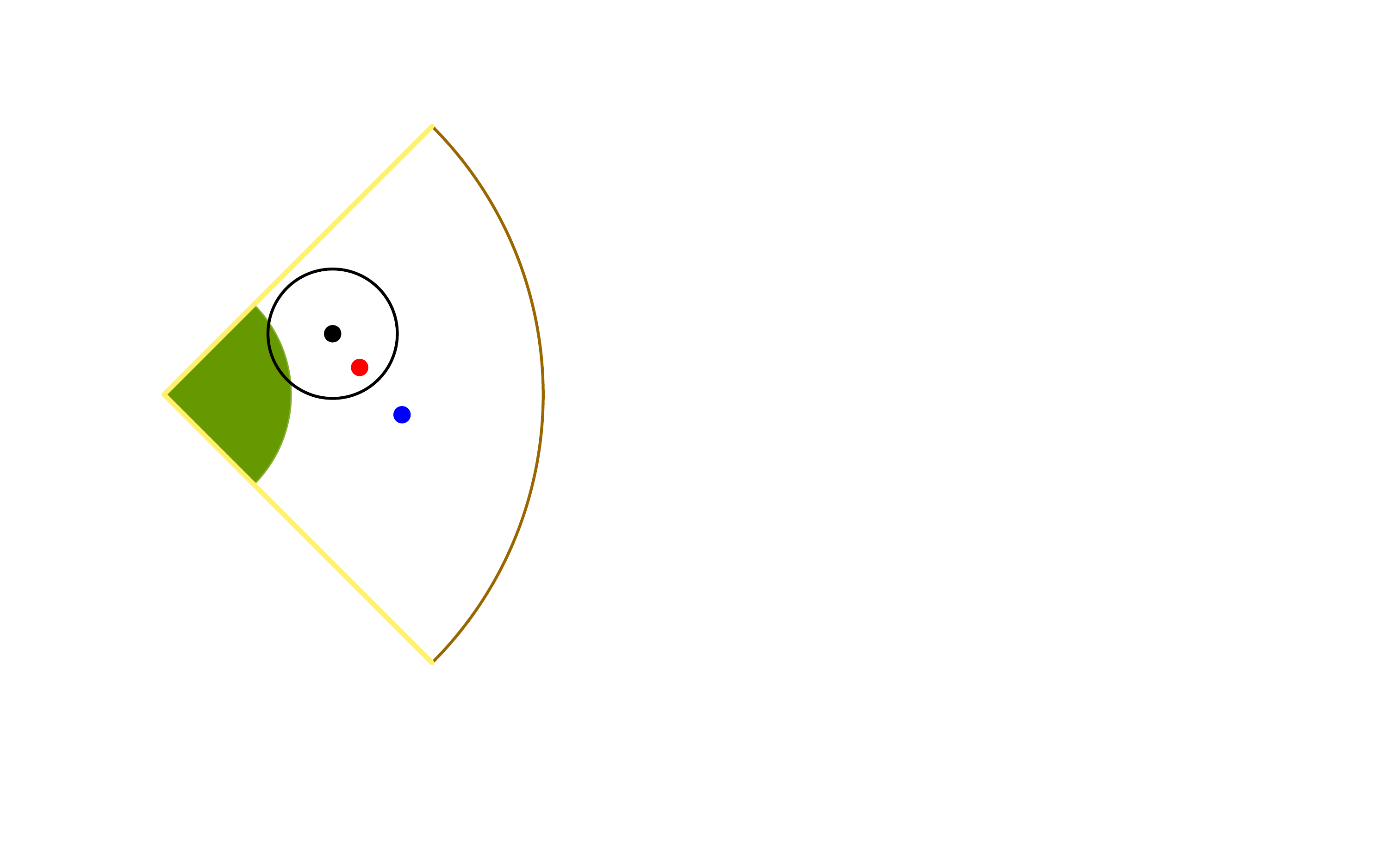}
        \caption{Breach}
    \end{subfigure}
    \begin{subfigure}[b]{0.3\linewidth}
    \centering
        \includegraphics[trim = 150 190 880 130, clip, width=  \linewidth]{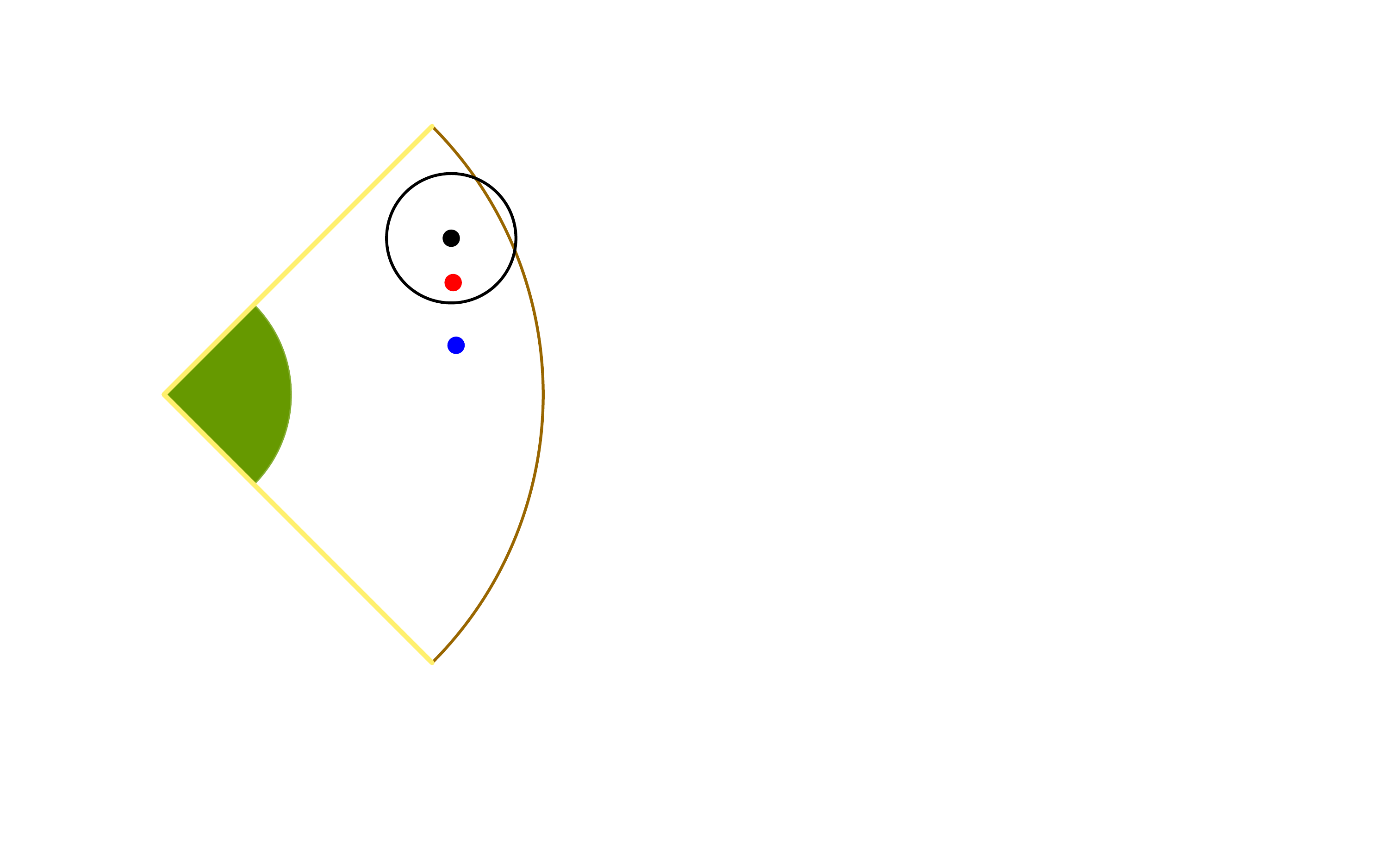}
        \caption{Evasion}
    \end{subfigure}
    \begin{subfigure}[b]{0.3\linewidth}
    \centering
        \includegraphics[trim = 150 190 880 130, clip, width=  \linewidth]{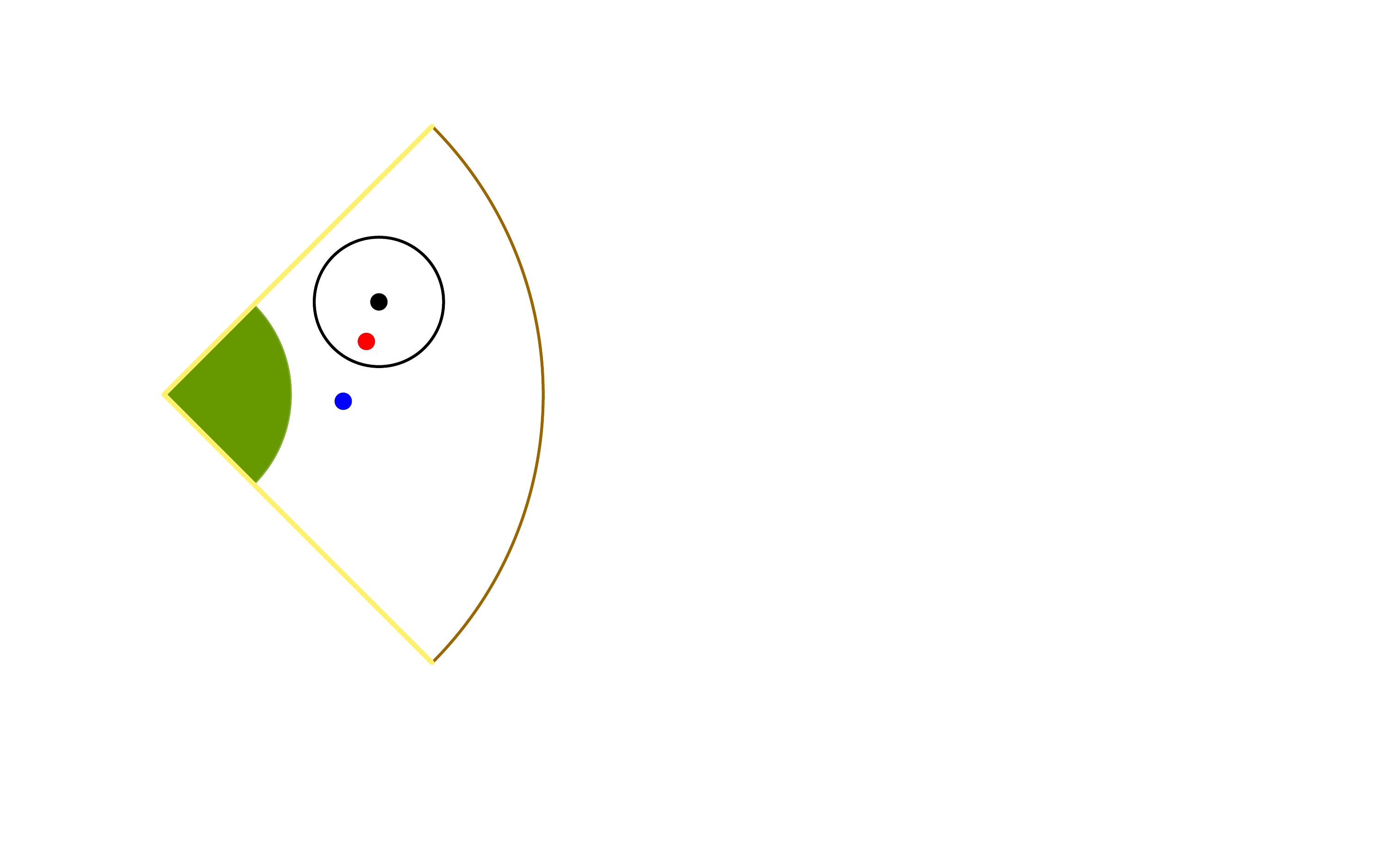}
        \caption{Capture}
    \end{subfigure}
    \caption{The blue and red dots represent the positions of the defender and
the attacker, respectively. The black
circles represent the Apollonius circle.}
    \label{fig:gameOutcome}
\end{figure}

The capture outcome is certainly the most interesting one since the capture location will become the defender's initial location for the next game. 
Therefore, the attacker must strategically choose to terminate this game in a location which is the most favorable for the future incoming attackers.
 The \textit{breach} and \textit{evasion} outcomes will be discussed later in \Cref{sec:GameProgress}.


\subsection{Equilibrium Strategy Pair for Capture}

As discussed in the previous section, the current attacker aims to get captured at a location such that it optimizes the probability of successful breach for the subsequent attackers. We approach the problem by first analyzing a simplified scenario involving only two consecutive arrivals, and then extend the derived strategies to the full sequence of attackers. This approach is justified by the assumption that the attackers’ arrival locations are independent and identically distributed, allowing generalization from the two-attacker case to the full sequence.
\par
Let $f(\x, \theta)$ denote the probability that a defender starting at $\x \in \mathcal{R}_E$ captures an attacker appearing on the TSR boundary at an angle $\theta$. 
Then, the total capture probability of a defender starting at $\x$ is
\begin{align*}
    P(\x) = \int_{-\Phi}^{\Phi} f(\x,\theta) p(\theta) \dd \theta, 
\end{align*}
where $p(\theta)$ is the probability that attacker appears at angle $\theta$.
Under the uniform arrival assumption, we have
$p(\theta)=\frac{1}{2 \Phi}$.
\begin{figure}
    \centering
    \centering
    \includegraphics[trim = 470 550 670 80, clip,width = 0.35 \textwidth]{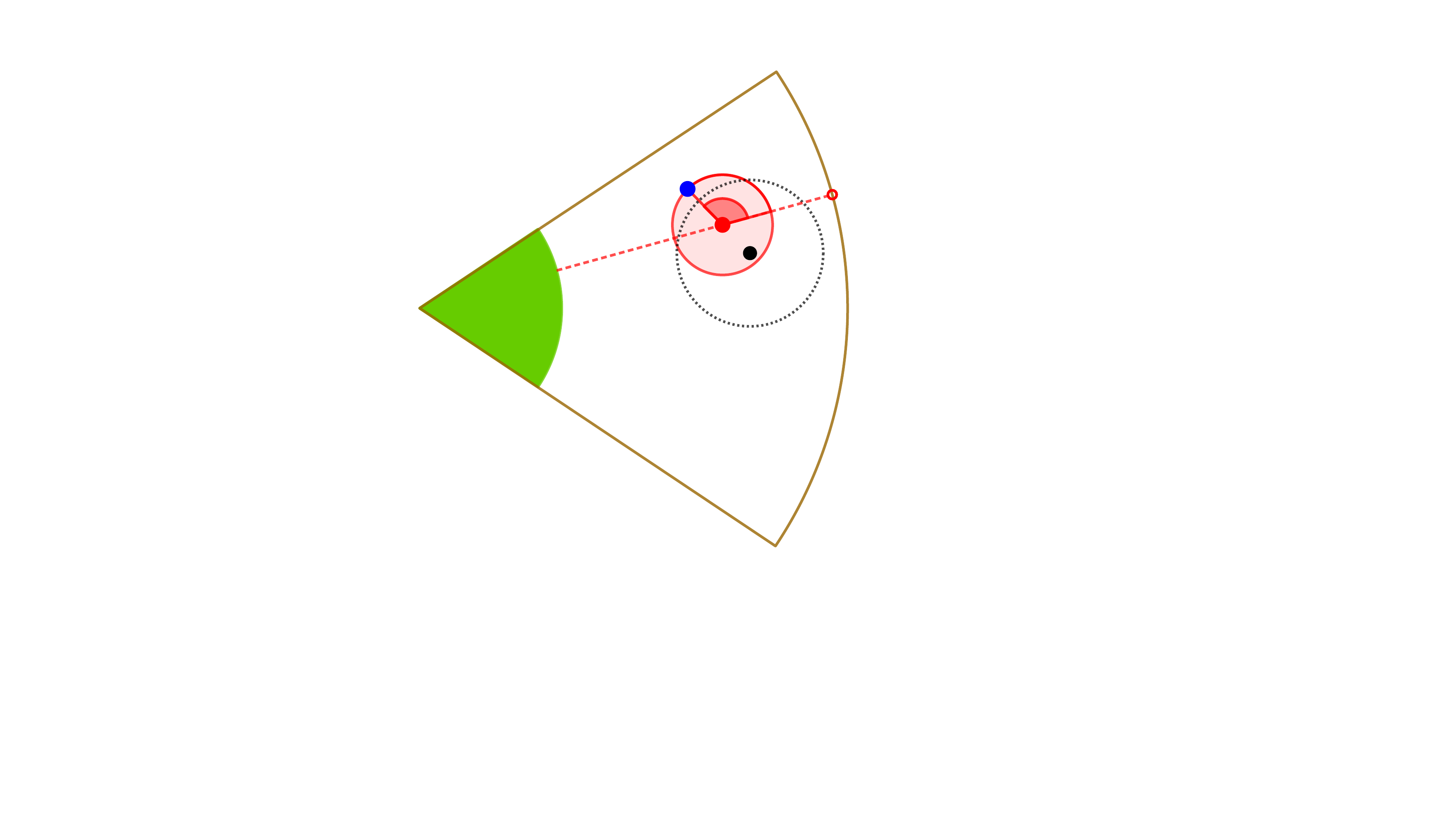}
 \put(-95,25) {$\re{\xa(\taue)}$}
 \put(-50,20) {${\xc(\taue)}$}
  \put(-95,65) {$\color{blue}{\xd(\taue)}$}
  \put(-61,57) {$\re{\thetae}$}
  \put (-170, 11.7 ) {\re{\line(3.6,1) {52}}}
  \put (-170, 11.7 ) {\re{\line(1,0) {54}}}
  \put (-158, 11.7 ) {\re{\line(0,1) {3}}}
  \put(-135,14.5) {$\re{\tao}$}
           \caption{An engagement configuration with the intruder at time $\taue$,  where {\re $\medbullet$ }  and  $\color{blue}{\medbullet}$  denote the locations of the intruder and the defender, respectively.}
    \label{fig:eng}
        \vspace{-12 pt}
\end{figure}
\noindent  Now, suppose an attacker appears on the TSR boundary at the location $(\ro+\rt)\uvec(\tao)$ and the defender starting at $\xdo$ is able to begin the full information phase at an angle $\thetae$ after a duration of $\taue$; see Fig.~\ref{fig:eng}. 
Recall that the attacker moves radially (i.e., in the shortest path) toward the target center until it senses the defender.
Therefore, at the moment the full information phase starts, the agents' locations are 
\begin{align} \label{eq:xaxd_taue}
\begin{split}
    \xa(\taue) & = (\ro+\rt - \nu \taue)\uvec(\tao), \\
    \xd(\taue) & = \xa(\taue) + \ra \uvec(\thetae).
\end{split}
\end{align}
From the engagement configuration we may construct the corresponding AC and conclude that the attacker shall get captured at the point $\xcap$ that minimizes the defender's probability for capturing the next attacker, i.e.,
\begin{align}\label{eq:capture_point}
    \xcap = \argmin_{\x \in \mathcal{A}(\taue, \thetae)} P(\x),
\end{align}
where $\mathcal{A}(\taue, \thetae)$ denotes the AC with its center and radius given by \eqref{eq:AC2} when we set $t = \taue$ and $\theta = \thetae$ in \eqref{eq:AC2}.
On the other hand, the defender's objective is to maximize the probability of the next capture while capturing the current one. 
Therefore, the defender must choose the engagement configuration $(\taue, \thetae)$ accordingly. 
In other words, the agents perform the following max-min optimization: 
\begin{align} \label{eq:stackelberg}
    \max_{(\taue, \thetae) \in \lcap(\xdo,\xao)}~\min_{\x \in \mathcal{A}(\taue, \thetae)} P(\x),
\end{align}
where $\lcap(\xdo, \xao)$ is the set of all engagement configurations that result in the capture of the current attacker.
This set depends on the initial locations $\xao$ and $\xdo$. 
In \Cref{sec:OptEng} we will discuss how to compute $\lcap(\xdo, \xao)$. 

\begin{remark}
    The $\max$-$\min$ optimization in \eqref{eq:stackelberg} is a Stackelberg game where the defender (i.e., the `leader') first picks the engagement point $(\taue,\thetae)$ and the attacker (i.e., the `follower') responds to the defender's choice.
    Once we have computed $\lcap(\xdo, \xao)$ and $P(\cdot)$, we may numerically solve \eqref{eq:stackelberg}. 
    However, a numerical solution requires an exhaustive search that is computationally challenging and potentially intractable, especially in real-time operations. 
    Therefore, in this paper we will instead focus on obtaining an analytical solution to a modified version of \eqref{eq:stackelberg} in the form:
    \begin{align} \label{eq:stackelberg2}
    \max_{(\taue, \thetae) \in \lscap(\xdo,\xao)}~\min_{\x \in \mathcal{A}(\taue, \thetae)} P(\x),
\end{align}
    where $\lscap \subset \lcap$ is to be discussed in the next section.
\end{remark}

\section{Optimal Engagement Configurations} \label{sec:OptEng}
In this section, we discuss the computation of $\lcap(\xdo, \xao)$ and $\lscap(\xdo, \xao)$. 
For a given engagement pair $(\taue, \thetae) \in \lcap(\xdo, \xao)$ the corresponding AC must not have any intersection with the target to prevent breach.
That is, we must have 
\begin{align} \label{eq:guard_condition}
    \begin{cases}
        \|\xc(\taue)\| \ge \ro+\gamma\ra, &\text{ if } \ | \angle \xc(\tau) | \le \Phi, \\
        \|\xc(\taue)- \ro\uvec(\pm \Phi)\| \ge \gamma\ra, &\text{ otherwise. } 
    \end{cases}
\end{align}
\begin{remark}
Throughout the above and subsequent equations, the notation $\pm \Phi$ 
is to be interpreted as follows: the positive case ($+\Phi$) corresponds 
to $\angle \xc(\tau) > \Phi$, while the negative case ($-\Phi$) 
corresponds to $\angle \xc(\tau) < -\Phi$.
\end{remark}

Equation \eqref{eq:guard_condition} is both necessary and sufficient to ensure that the target is guarded from the attacker (i.e., no breach).
However, \eqref{eq:guard_condition} does not guarantee the prevention of the attacker's escape. 
The necessary and sufficient condition to prevent escape is to ensure that the AC does not have any intersection with the outside of the TSR. 
That is,
\begin{align}\label{eq:escape_condition}
    \begin{cases}
        \|\xc(\taue)\|+ \gamma\ra \le r_0 + \rt,  ~ ~ ~ ~ ~ ~ ~ ~ ~\text{ if } \ | \angle \xc(\tau) | \le \Phi, \\
        \|\xc(\taue)- (\ro+\rt)\uvec(\pm \Phi)\| \le \gamma\ra,  ~ ~ ~ ~ ~  ~ ~ \text{ otherwise. } 
    \end{cases}
\end{align}
Therefore, $(\taue, \thetae) \in \lcap(\xdo, \xao)$ must satisfy both \eqref{eq:guard_condition} and \eqref{eq:escape_condition}. 
Next, we further simplify these two conditions to obtain a better geometrical understanding of $\lcap(\xdo, \xao)$ which will further assist in solving \eqref{eq:stackelberg2}. 
To that end, we use \eqref{eq:xaxd_taue} in \eqref{eq:guard_condition} to simplify $\|\xc(\taue)\| \ge \ro+\gamma\ra$ as
\begin{align} \label{eq:critical_theta}
    \sin^2\left(\frac{\thetae - \tao}{2}\right) \ge &  \frac{(\ro+\gamma\ra)^2 - (\ro+\rt-\taue \nu -\beta\ra)^2}{4\beta\ra(\ro+\rt-\taue \nu)}. 
\end{align} 
Similarly, by simplifying the second case in \eqref{eq:guard_condition} one may obtain 
\begin{align} \label{eq:critical_theta'}
\sin^2\left(\frac{\thetae - \tao}{2}\right) \ge  \frac{M-(\ro+\rt-\taue \nu -\beta\ra)^2}{4\beta\ra(\ro+\rt-\taue \nu)}\end{align}
    where $M$ is equal to
    \begin{align*}
            M  =& (\gamma \ra)^2 -\ro^2+2(\ro+\rt-\taue \nu)\ro \cos(\tao \pm \Phi)
            \\
            &-2\beta\ra\ro \cos(\thetae \pm \Phi).
    \end{align*}

\begin{remark} \label{rem:critical_tau}
If the right hand side of \eqref{eq:critical_theta} is negative, or equivalently $\taue \le (\rt - \nu\ra/(1-\nu)   )/\nu$, then any $\thetae \in [-\pi,\pi]$ will satisfy the inequality in \eqref{eq:critical_theta}.
On the other hand, if the right hand side of \eqref{eq:critical_theta} is greater than 1, or equivalently, $\taue \ge (\rt -\nu\ra/(1+\nu))/\nu$, then no $\thetae$ satisfies \eqref{eq:critical_theta}. 
This provides two critical times 
\begin{subequations} \label{eq:critical_times}
\begin{align}
    \tau_1^* = \frac{\rt}{\nu} - \frac{\ra}{(1-\nu)},\\ \tau_2^* = \frac{\rt}{\nu} - \frac{\ra}{(1+\nu)},
\end{align}
\end{subequations}
with the properties that if the defender can reach the sensing boundary of the attacker before $\tau_1^*$, then guarding is guaranteed regardless of the angle $\thetae$ at which the engagement starts, whereas if the defender cannot reach the sensing boundary before $\tau_2^*$, then breaching is inevitable regardless of the engagement angle.
(A similar conclusion may also be drawn from \eqref{eq:critical_theta'}, however, the expressions of these critical times will be dependent on $\tao$ and much more complicate than the ones in \eqref{eq:critical_times}.) 
\end{remark}

For every engagement time $\taue$, one may solve \eqref{eq:critical_theta} to obtain the set $\Theta(\taue, \tao)$ such that every $\thetae \in \Theta(\taue, \tao)$ satisfies \eqref{eq:critical_theta}.
As stated in \Cref{rem:critical_tau}, $\Theta(\taue)$ can become an empty set or the entire range $[0,2\pi)$ depending on $\taue$. 

Likewise, we also use \eqref{eq:xaxd_taue} in \eqref{eq:escape_condition} to simplify $\|\xc(\taue)\|+ \gamma\ra \le \ro + \rt$ to
{\small
\begin{align} \label{eq:critical_theta_guard}
\sin^2\left(\frac{\thetae - \tao}{2}\right) \le 
&\; \frac{
(\ro + \rt - \gamma\ra)^2 
- (\ro+\rt-\taue \nu -\beta\ra)^2
}{
4\beta\ra(\ro+\rt-\taue \nu)
}
\end{align}
} 
\noindent 
and rewrite $ \|\xc(\taue)- (\ro+\rt)\uvec(\pm \Phi)\| \le \gamma\ra$ as 
\begin{align} \label{eq:critical_theta_guard'}
\sin^2\left(\frac{\thetae - \tao}{2}\right) \le  \frac{\hat{M}-(\ro+\rt-\taue \nu -\beta\ra)^2}{4\beta\ra(\ro+\rt-\taue \nu)}
\end{align}
    where $\hat{M}$  equals to
    \begin{align*}
    \hat{M} ={}& (\gamma \ra)^2 - (\ro + \rt)^2 \\
    & + 2(\ro + \rt - \taue \nu)(\ro + \rt) \cos(\tao \pm \Phi) \\
    & - 2\beta \ra \ro \cos(\thetae \pm \Phi).
\end{align*}

\begin{remark}
If the right hand side of \eqref{eq:critical_theta_guard} is negative, i.e., $\tau < \ra/(1+\nu)$, then no $\thetae$ satisfies \eqref{eq:critical_theta_guard}. 
On the other hand, that quantity is greater than $1$, i.e., $\tau > \ra/(1 - \nu)$, then every $\thetae \in [-\pi, \pi]$ satisfies \eqref{eq:critical_theta_guard}. 
This provides two critical times for escape prevention
\begin{subequations}
\label{eq:critical_times_guard}
\begin{align}
    \tau_3^* =\frac{\ra}{(1+ \nu)},\\ 
    \tau_4^* = \frac{\ra}{(1 - \nu)},
\end{align}
\end{subequations}
with the properties that if the defender reaches the sensing boundary of the attacker before $\tau_3^*$, then escape is guaranteed (assuming that the attacker acts rationally), whereas if the defender starts engagement after $\tau_4^*$, then escape is impossible without being captured. (A similar conclusion may also be drawn from \eqref{eq:critical_theta_guard'}, however, the expressions of these critical times will be dependent on $\tao$ and much more complicate than the ones in \eqref{eq:critical_times_guard}.) 
\end{remark}

Equations \eqref{eq:critical_theta}-\eqref{eq:critical_theta'} provide the feasible engagement points on the attacker's sensing boundary so that the engagement results in \emph{guaranteed guarding} of the target. 
Solving \eqref{eq:critical_theta} or \eqref{eq:critical_theta'} (if $|\angle \xc(\taue)| \le \Phi$ the defender uses \eqref{eq:critical_theta}; otherwise, it uses \eqref{eq:critical_theta'}) with equality provides $\theta_{\text{eng}}(\tau)$ such that, for any $\pi \ge |\theta| > \theta^1_{\text{eng}}(\tau)$, equations \eqref{eq:critical_theta} or \eqref{eq:critical_theta'} are satisfied.
Together, $\xa(\tau)$ and $\theta^1_{\text{eng}}(\tau)$ provide the \emph{guaranteed guarding} engagement surface $S_g(\tau)$ at time $\tau$
\begin{align*}
    S_g(\tau) \triangleq \{\x : \x = \xa(\tau)+\rho_A[c_\theta, s_\theta], |\theta| \in [\theta_{\text{eng}}^1(\tau), \pi]\}.
\end{align*}
\par
A representation of $S_g(\tau)$ is shown with a blue arc in the left subplot in Fig.~\ref{fig:engagementset}.
Starting the \textit{Full Information Phase} at time $\tau$ on the surface $S_g(\tau)$ ensures guaranteed guarding, however, it may also lead to evasion of the attacker. 

Analogously, the \emph{guaranteed escape prevention} engagement surface at time $\tau$ is described by
\begin{align*}
    S_c(\tau)= \{\x : \x = \xa(\tau)+\rho_A[c_\theta, s_\theta], |\theta| \le \theta^2_{\text{eng}}(\tau)\},
\end{align*}
where $\theta^2_{\text{eng}}(\tau)$ is the solution of \eqref{eq:critical_theta_guard} or \eqref{eq:critical_theta_guard'} (if $|\angle \xc(\taue)| \le \Phi$ the defender uses \eqref{eq:critical_theta_guard}; otherwise, it uses \eqref{eq:critical_theta_guard'}) with equality.
An illustration of this engagement surface/arc is shown in red in the left subplot of Fig.~\ref{fig:engagementset}.
\begin{figure}     
    \centering
    \includegraphics[trim = 60 40 30 20, clip, width=0.46\linewidth]{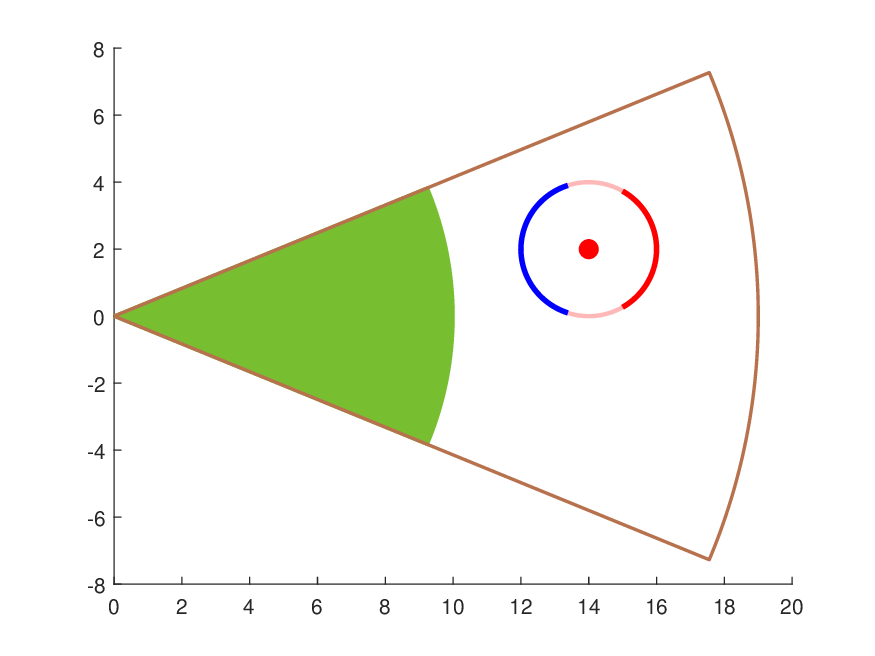}
        \centering
    \includegraphics[trim = 200 270 200 200, clip, width=0.4 \linewidth]{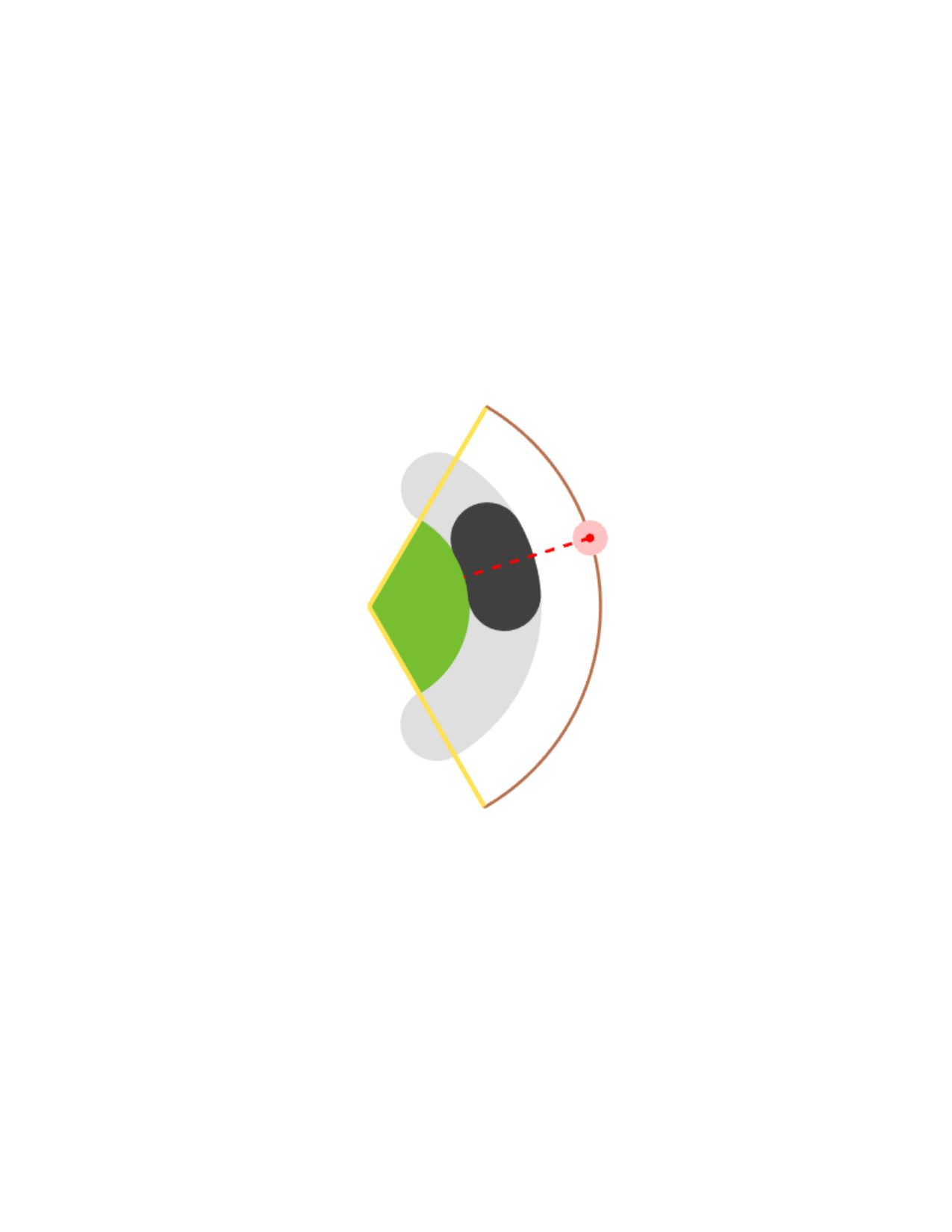}
    \caption{Left: Engagement set $S_g(\tau)$ (blue arc) and $S_c(\tau)$ (red arc) on the attacker's sensing boundary. 
Right: The light gray region denotes all possible capture locations when the defender follows the strategy described in Section \ref{sec:GameAnalysis}.\ref{sec:simplified}. 
For the given red attacker position, the dark gray region represents the subset of locations where this attacker can be captured.}
    \label{fig:engagementset}
\end{figure}
\begin{remark} \label{rem:time}
If the engagement phase begins at time $\tau$, then a necessary condition to prevent escape and ensure guarding is that the engagement starts at the intersection $S_g(\tau) \cap S_c(\tau)$.  
In terms of the critical times $\tau^*_i,\ i = 1,\ldots,4$, the necessary condition for successful guarding without escape is $\tau_2^* \ge \tau_4^*$, which simplifies to $\rt \ge 2\gamma\ra$.

\end{remark}
\begin{definition}
(Engagement Set) This is the collection of the pairs $(\taue,\thetae)$ that satisfy \eqref{eq:critical_theta} or \eqref{eq:critical_theta'} {\rm and} \eqref{eq:critical_theta_guard} or \eqref{eq:critical_theta_guard'}. 
That is, 
\begin{equation} \label{eq:SE}
\Se =
\left\{ (\thetae, \taue)~:~\thetae, \taue ~ \text{ satisfy } 
\begin{aligned}
&\eqref{eq:critical_theta} ~ or ~ \eqref{eq:critical_theta'},\\
&\eqref{eq:critical_theta_guard} ~ or ~ \eqref{eq:critical_theta_guard'} \ 
\end{aligned}
\right\}.
\end{equation}

\end{definition}

\noindent If the \textit{full information phase}—where both agents can see each other—begins on the \textit{Engagement Set} $\Se$ defined in \eqref{eq:SE}, then the attacker can neither \textit{escape} nor \textit{breach}, and thus \textit{capture} is guaranteed.

\begin{figure}
    \centering
    \includegraphics[trim = 50 160 128 150, clip, width = 0.45\textwidth]{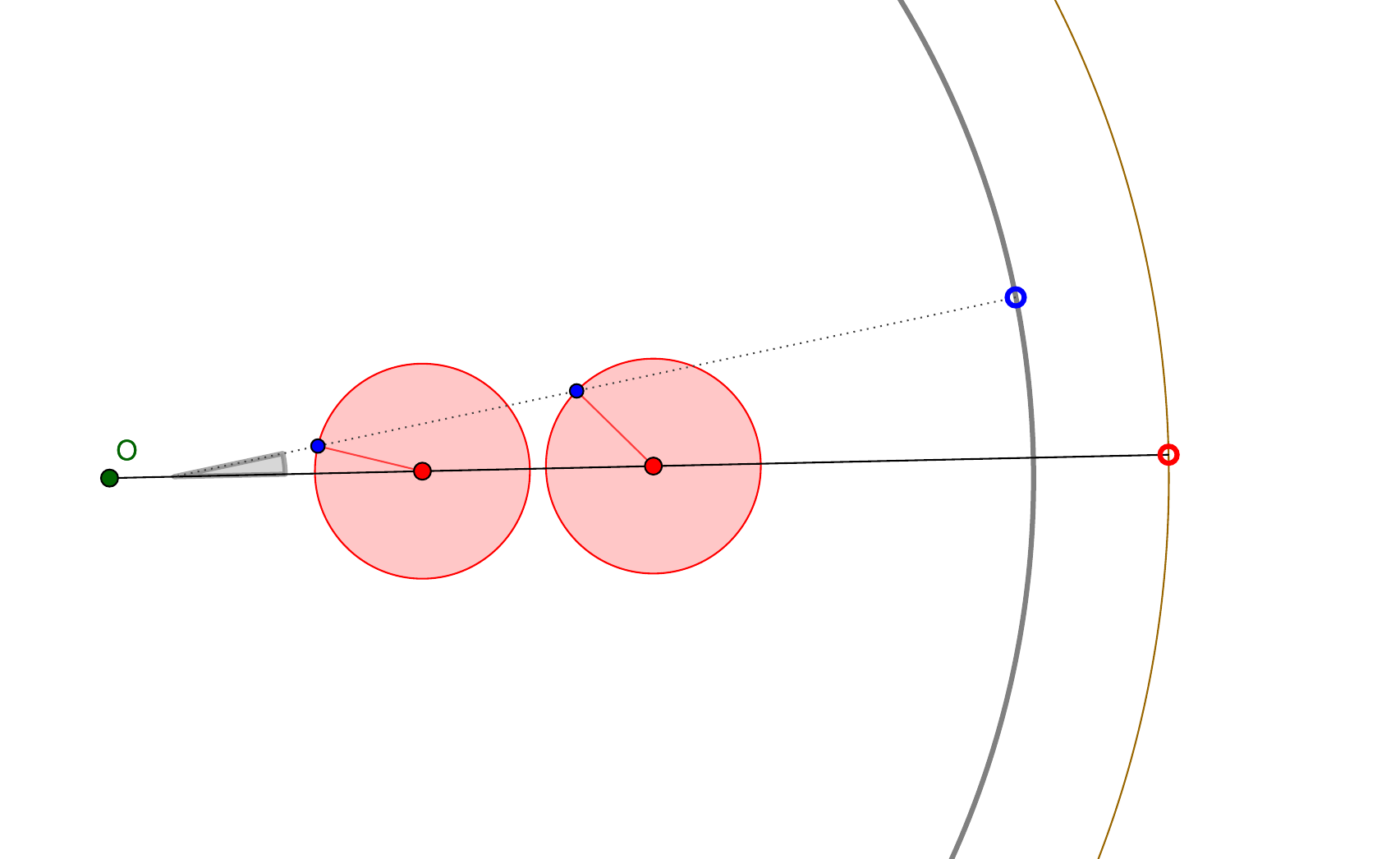}
    \put(-175,15) {$\xa(\taue)$}
    \put(-120,15) {$\xa(\tau)$}
    \put(-115,35) {\re$\ra$}
    \put(-30,25) {$\underset{{\ro+\rt - r}}{\underbrace{~~}}$}
    \put(-195,30) {\color{gray}$\theta$}
    \caption{ {\re $\medbullet$ } and {\color{blue} $\medbullet$ }  denote the location of the intruder and defender at times $\tau$ and $\taue$. The hollow red and blue circles denote their initial locations.
    The brown arc denotes a part of the TSR boundary. 
    The gray arc has a radius of $r$ and it is concentric to the target center (denoted by the green point $O$). 
    The defender moves along the dotted line. 
    The angle between the defender's path and the intruder's path is denoted by $\theta$.
    } \label{fig:sufficient}
\end{figure}
\begin{remark}
 To ensure that $\Se \neq \emptyset$, the parameters must satisfy $2\gamma\ra \le \rt$ which is ensured by \eqref{eq:Asm2} in Assumption \ref{assm:deadlock}.  \\
\end{remark}

\subsection{Capturability of an attacker} \label{Max Angle}
Let the defender start the game from the location $r\uvec({\theta_{\rm D0}})$.
For a given $r$ and $\theta_{\rm D0}$, the \textit{capturability} condition is determined based on the initial angular separation between the defender and the attacker at the time the attacker enters the TSR.
For any arbitrary engagement configuration $(\taue,\thetae)$, the defender can reach this configuration if $|\theta_{\rm A0}\!-\!\theta_{\rm D0}|\le  \theta_{\max}(\taue,\thetae, r)$, where $\theta_{\rm A0}$ is the entry angle of the attacker and 
\begin{subequations} \label{Maxanglesep}
\begin{align} 
    &\theta_{\max}(\taue,\thetae, r) = \cos^{-1} \bigg( \frac{ \reng^2 +r^2 - \taue^2}{2\reng r} \bigg)  + \phie, \\
    &\phie = \sin^{-1}\bigg(\frac{\ra \sin(\thetae)}{\reng} \bigg),  \\
    \begin{split}
    & \reng = \!\bigg((\ro\! + \! \rt-\taue\nu)^2 +\\ &  \ \ \ \  \ \ \  \ \ \! \ra^2 \!+ 2(\ro\! + \!\rt-\taue\nu) \ra\cos(\thetae)\bigg)^{\frac{1}{2}}; 
    \end{split}
    \end{align} 
     \end{subequations}
see \cite{pourghorban2022target} for a detailed derivation.
Therefore, at the beginning of a 1-vs-1 game, if $|\theta_{\rm A0}\!-\!\theta_{\rm D0}|\le  \theta_{\max}(\taue,\thetae, r)$ for some $(\taue, \thetae) \in \Se $ then  capture is guaranteed.
Equation \eqref{Maxanglesep} gives the necessary condition on the defender's initial location (i.e., $r$ and $\theta_{\rm D0}$) to ensure reachability to the engagement surface. 
However, it does not provide any guarantee whether the defender will be detected by the intruder along its way to  $\x_{\rm eng}$, in which case the engagement will start earlier in a different configuration.
For example, a scenario like the one presented in Fig.~\ref{fig:sufficient} may occur where the defender is sensed at $\tau < \taue$. 
In the following theorem we formally prove that a scenario like  Fig.~\ref{fig:sufficient} shall not occur under a certain condition on $r$. 
In other words, the necessary condition in Equation \eqref{Maxanglesep} is also sufficient.
\begin{lemma}[Sufficient]\label{lemma:sufficiency}
  If $r\le \ro + \rt - \ra$, then $\theta_{\rm D0} \le \theta_{\max}(\taue,\thetae,r)$ is a sufficient condition for a defender located at $r\uvec({\theta_{\rm D0}})$ to reach $(\taue, \thetae) \in \Se$ before getting detected by the intruder.
\end{lemma}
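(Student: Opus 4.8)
The plan is to show that, under the bound $r \le \ro + \rt - \ra$, the defender can travel in a straight line from its start $r\uvec(\tdo)$ toward the intended engagement point $\x_{\rm eng} = \xa(\taue) + \ra\uvec(\thetae)$ without ever entering the attacker's sensing disk prematurely. The key geometric fact is that the attacker moves radially inward along the ray $\uvec(\tao)$, and the full-information phase is supposed to begin exactly when $\|\xd - \xa\| = \ra$. So I must rule out the existence of an earlier time $\tau < \taue$ at which $\|\xd(\tau) - \xa(\tau)\| = \ra$ while the defender is still en route. Equivalently, along the defender's straight-line path parametrized by $t \in [0,\taue]$, the distance to the (moving) attacker point $\xa(t) = (\ro + \rt - \nu t)\uvec(\tao)$ should stay at least $\ra$ until $t = \taue$.

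First I would set up the distance function $g(t) = \|\xd(t) - \xa(t)\|$ where $\xd(t)$ interpolates linearly from $r\uvec(\tdo)$ to $\x_{\rm eng}$ at unit speed, and $\xa(t)$ slides inward at speed $\nu$. By construction $g(\taue) = \ra$, so it suffices to show $g$ has no earlier crossing of the level $\ra$; the cleanest route is to argue $g(t) \ge \ra$ for all $t \le \taue$, which (since $g$ is continuous and equals $\ra$ at the endpoint) follows if I can show $g(t) > \ra$ on $[0,\taue)$ or, more robustly, that any crossing must be the terminal one. I would first bound $g(0) = \|r\uvec(\tdo) - (\ro+\rt)\uvec(\tao)\|$; using $r \le \ro+\rt-\ra$ and the triangle inequality, $g(0) \ge (\ro+\rt) - r \ge \ra$, so the defender starts at least $\ra$ away — the path cannot begin inside the sensing disk. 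Then the heart of the argument is a monotonicity/convexity claim: because the attacker retreats toward the origin while the defender heads to a point that (by the guarding constraints defining $\Se$, in particular \eqref{eq:critical_theta}) is not on the "wrong side," the two agents do not approach to within $\ra$ except at the planned instant. I expect to exploit that both $\xa(\cdot)$ and $\xd(\cdot)$ are affine in $t$, so $g(t)^2$ is a quadratic in $t$; a quadratic that equals $\ra^2$ at $t=\taue$ and is $\ge \ra^2$ at $t=0$ can dip below $\ra^2$ in between only if its vertex lies strictly inside $(0,\taue)$ and its minimum value is below $\ra^2$. I would show this cannot happen by checking the sign of $\tfrac{d}{dt}g(t)^2$ at $t = \taue$ (the relative velocity's radial component at contact) — if the agents are still closing at the contact time, the quadratic is decreasing there, forcing the vertex to be at or past $\taue$, hence $g^2 \ge \ra^2$ on $[0,\taue]$.

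The main obstacle will be establishing that closing sign / vertex location cleanly: I need to verify $\langle \dot\xd(\taue) - \dot\xa(\taue),\, \xd(\taue) - \xa(\taue)\rangle \le 0$ at the engagement instant, i.e., that the defender is not overshooting, and this is where the specific definition of $\x_{\rm eng}$ and the constraint $r \le \ro+\rt-\ra$ must be used in tandem — the bound on $r$ is precisely what keeps the defender "behind" rather than having looped around through a region where it would have been detected earlier. If the direct quadratic-vertex argument proves delicate (e.g. because the engagement angle $\thetae$ can be large), I would fall back on a purely geometric comparison: show that the entire straight segment from $r\uvec(\tdo)$ to $\x_{\rm eng}$ lies outside the union of open disks $B(\xa(t),\ra)$ for $t \in [0,\taue)$, using that this union is itself a "sausage" region along the radial segment $[\xa(\taue), (\ro+\rt)\uvec(\tao)]$ of radius $\ra$, and that the condition $r \le \ro+\rt-\ra$ forces the defender's start to lie outside the corresponding enlarged cone/annulus while $\x_{\rm eng}$ sits exactly on that sausage's boundary, so the chord touches it only at the endpoint.
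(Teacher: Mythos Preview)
Your approach differs from the paper's. You work with the full squared distance $g(t)^2=\|\xd(t)-\xa(t)\|^2$, note it is quadratic in $t$, and plan to rule out an interior crossing by checking that $\tfrac{d}{dt}g^2\big|_{t=\taue}\le 0$ (so the parabola's vertex lies at or beyond $\taue$). The paper instead projects onto the attacker's radial direction: placing the attacker's ray along the horizontal axis, it considers the horizontal separation $h(t)$, which is \emph{affine} in $t$ with $h(0)\ge \ro+\rt-r\ge\ra$, and then case-splits on the angle $\theta$ between the defender's straight-line path and the attacker's ray. When $\cos\theta\ge\nu$ it argues the horizontal gap alone stays at least $\ra$, so an early contact is impossible; when $\cos\theta<\nu$ it uses the monotonicity of the affine $h$ to show the hypothetical early-detection configuration at $\tau$ is inconsistent with the known configuration at $\taue$.

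The projection buys the paper a \emph{linear} quantity, so monotonicity is free and no endpoint-derivative sign needs to be computed. What you correctly flag as ``the main obstacle''---verifying $\langle \dot\xd(\taue)-\dot\xa(\taue),\,\xd(\taue)-\xa(\taue)\rangle\le 0$---is exactly the content the paper's case split supplies, but via a different decomposition. Your route is tidier in principle, yet that sign condition is not obviously derivable from the engagement constraints alone without reintroducing a case analysis on the approach geometry; as stated it is a genuine gap. The quickest way to close it is to bound $g(t)\ge |h(t)|$ and analyze the affine projection instead, at which point you essentially recover the paper's argument. Your fallback ``sausage'' picture is also viable, but unwinding it amounts to the same projection onto $\uvec(\tao)$.
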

\begin{proof}
We will use proof by contradiction for this theorem. To this end, let us assume that the defender is first detected at time $\tau<\taue$ as shown in Fig.~\ref{fig:sufficient}.

Let us first consider the case when $\theta \le \cos^{-1}(\nu)$ in Fig.~\ref{fig:sufficient}, where recall that $v$ denotes the speed ratio.
Therefore, the horizontal velocity of the defender is more than $\nu$.
Since the initial horizontal distance between the defender and the intruder is $\ro+\rt - r \ge \ra$ (due to the theorem's condition), the horizontal distance between them will remain at least $\ra$ as long as the defender keeps moving on the straight line path shown in Fig.~\ref{fig:sufficient}.   
However, in Fig.~\ref{fig:sufficient} we notice that the horizontal distance between them is smaller than $\ra$ at time $\tau$. 
Therefore, the configuration at time $\tau$ is not feasible. 

Now we consider the other case when $\theta > \cos^{-1}(\nu)$.
In this case, the horizontal velocity of the defender is smaller than $\nu$ and thus the horizontal distance between the defender and intruder \textit{decreases} with time. 
However, we notice that the horizontal distance between them has \textit{increased} from time $\tau$ to time $\taue$. 
Therefore, it must be the case that the shown configuration at time $\tau$ does not occur. 
\end{proof}
\subsection{Reachable portion of the Engagement Set} \label{reachsurf}

Let the defender be located at $r\uvec({\theta_{\rm D0}})$ at the time the attacker appears on the TSR boundary at $(\ro+\rt)\uvec(\theta_{\rm A0})$, where $|\theta_{\rm D0} - \theta_{\rm A0}|\le \theta_{\max}(\taue,\thetae, r)$. For a given defender location $\xdo= r\uvec(\theta_{\rm D0})$ and a given attacker location $\xao=(\ro+\rt)\uvec(\theta_{\rm A0})$, the reachable part of the engagement surface is defined as
    \begin{align}
       \lcap(\xdo, \xao) =\{ (\taue,\thetae)\in  \Se ~: 
       \end{align}
       \vspace{-0.7 cm}
       \begin{align*}
       |\theta_{\rm A0}\!-\!\theta_{\rm D0}|\le  \theta_{\max}(\taue,\thetae, r)\}.
    \end{align*}
Therefore, for a defender located at $r\uvec(\theta_{\rm D0})$, only a part of the \textit{Engagement Surface} may be reachable and the defender can choose any point in this \textit{capturable set} $\lcap(\xdo, \xao)$ to engage and start the \textit{Full Information Phase}.
\begin{remark}
    For a given defender location $r\uvec(\theta_{\rm D0})$, an attacker is guaranteed to breach the target if and only if $\lcap(\xdo, \xao)=\emptyset$. In this case, the defender moves radially toward the target center to better position itself for the next 1-vs-1 game. 
\end{remark}
\subsection{Simplified Engagement }\label{sec:simplified}
For a defender, to find the optimal engagement configuration $(\taue, \thetae)$, it performs the max-min optimization in \eqref{eq:stackelberg}. Solving this max-min game is computationally prohibitive in real-time settings. Therefore, in the following, we discuss a simplified version of the solution that retains the same structural framework as the original problem but addresses the computational challenges more effectively.
\par
Given an attacker location $\xao=(\ro+\rt)\uvec(\theta_{\rm A0})$ and a defender location $\xdo= r\uvec(\theta_{\rm D0})$ the simplified reachable part of the engagement surface is defined as
    \begin{align}
       \lscap(\xdo, \xao) =\{ (\taue,\thetae)\in  \lcap(\xdo, \xao) ~: 
       \end{align}
       \vspace{-0.7 cm}
       \begin{align*}
       \| \xc(\taue) \| =\ro+ \gamma \ra \}.
    \end{align*}
Therefore, the defender selects engagement points only from the \textit{capturable set} $\lcap(\xdo, \xao)$, where the corresponding AC is tangent to the target. This ensures that, after each capture, the defender remains closer to the target center than it would by choosing any other engagement location where $| \xc(\taue) | > \ro + \gamma \ra$.

\par
Utilizing the simplified strategy, the defender solves \eqref{eq:guard_condition} only for the equality case for each $\taue$ and computes the corresponding pair $(\taue,\thetae)$. Consequently, $\Se$ in \eqref{eq:SE} is restricted to those $(\taue,\thetae)$ pairs that satisfy \eqref{eq:critical_theta} with equality, thereby significantly reducing the number of candidate solutions the defender must evaluate when solving \eqref{eq:stackelberg}.
Under Assumption~\ref{assm:deadlock}, this strategy also guarantees that, in the event of an attacker breach, the defender can return to the target center before the next game begins. However, there may exist $(\taue,\thetae) \in \lcap(\xdo, \xao)$ such that $(\taue,\thetae) \notin \lscap(\xdo, \xao)$. In such cases, by selecting from the simplified set, the defender may lose the current game—even though capture would have been possible under the optimal strategy.
\par
Since these optimal engagement points lie within a well-defined geometric region, the corresponding capture locations delineate a particularly structured area of interest, which we refer to as the \textit{capture cone}.

\begin{remark}\label{remark:capturecone}
All captures occur within the capture cone, which is concentric with the target, shares its angular span of $|\Phi|$, and has a radius of $\ro + 2\gamma \ra$.
\end{remark}
\section{Capture Probability} \label{sec:capprob}
In this section, we discuss the computation of $P(\x)$. If the defender is at the target center, then, due to Assumption~\ref{assm:small target}, capture is guaranteed with probability $1$ regardless of the attacker's arrival angle $\ta$.
 Otherwise, the capture probability depends on the defender's starting location $\x$.

\begin{figure}
    \centering
    \includegraphics[trim = 110 50 380 65, clip, width=0.4\linewidth]{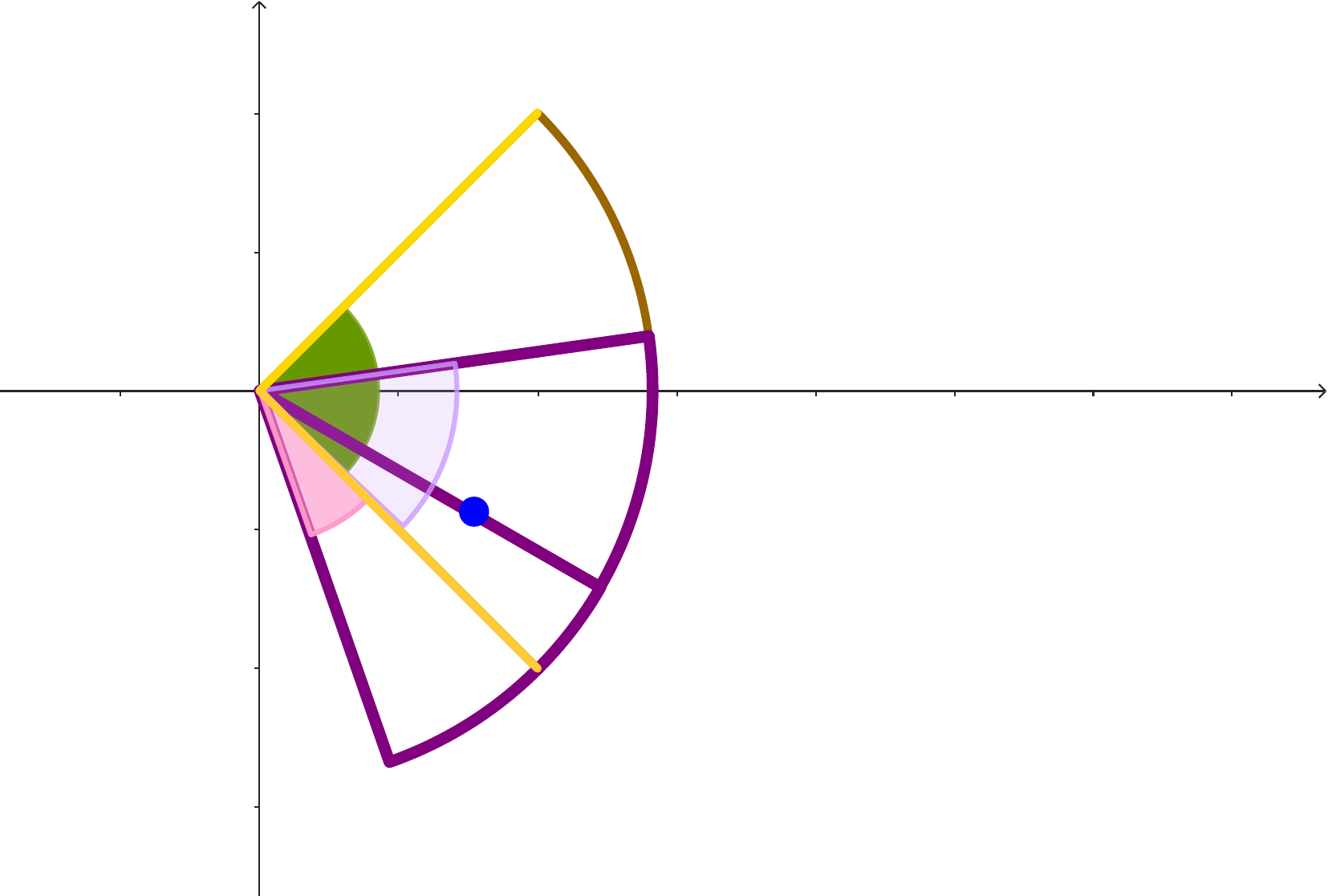}
    \put (-52, 56 ) {$\theta_{\max}(r)$}
    \put (-120, 30 ) {$|\theta_{\rm D0}|+\theta_{\max}(r)-\Phi$}
    \hspace{0.2 cm}
       \includegraphics[ trim = 215 50 670 10, clip,width=0.35\linewidth]{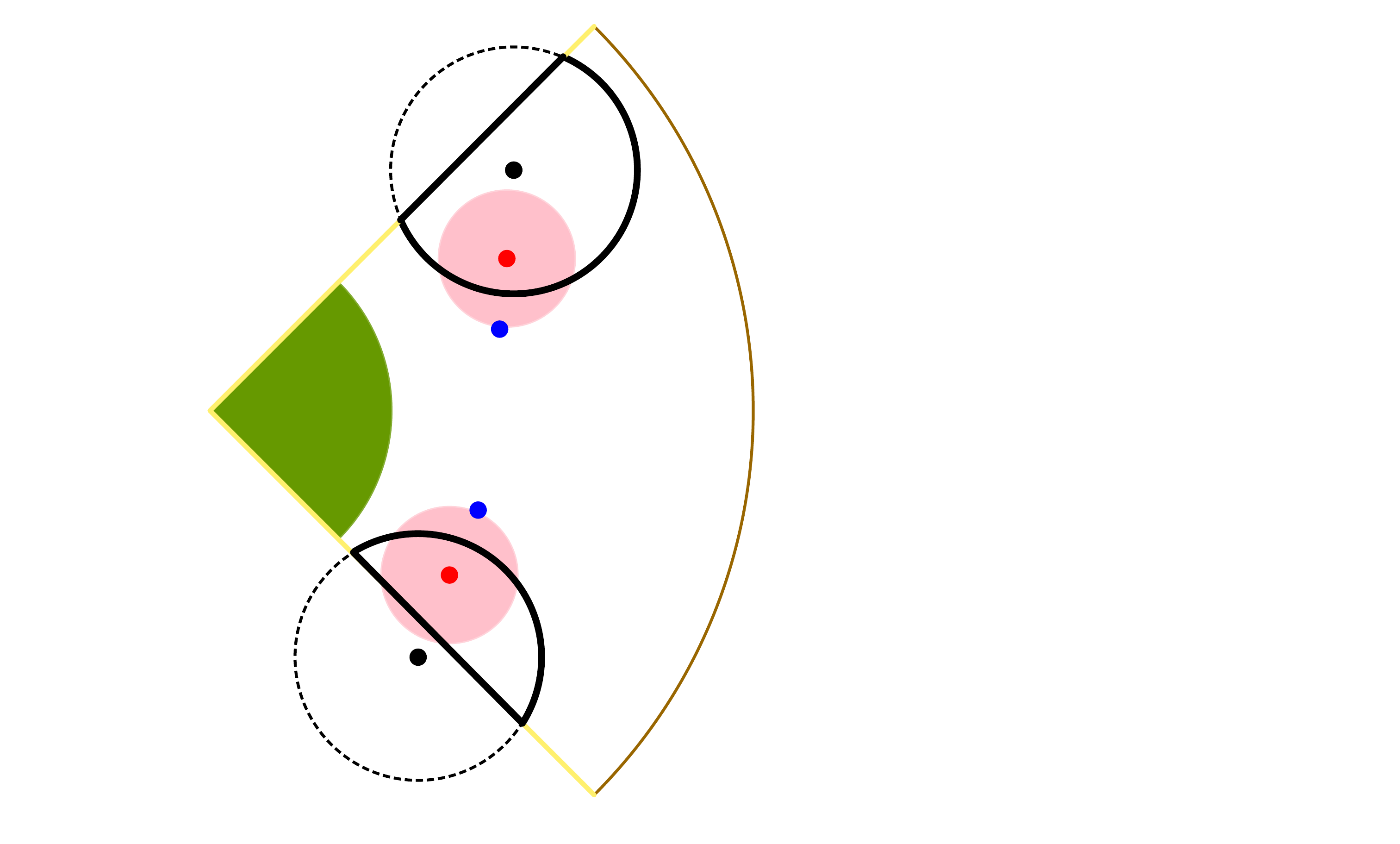}
    \caption{Left: For a defender located at $r\uvec({\theta_{\rm D0}})$, the configuration when $|\theta_{\rm D0}| > \Phi-\theta_{\max}(r)$. Right: The configuration of the Apollonius circle for two different engagements. In the top engagement, the Apollonius circle has an intersection with the edges of the environment.}
    \label{fig:specialcase}
\end{figure}
 \begin{figure}
\label{capture_distribution}
    \centering
    \includegraphics[trim = 215 250 120 250, clip, width=0.4\linewidth]{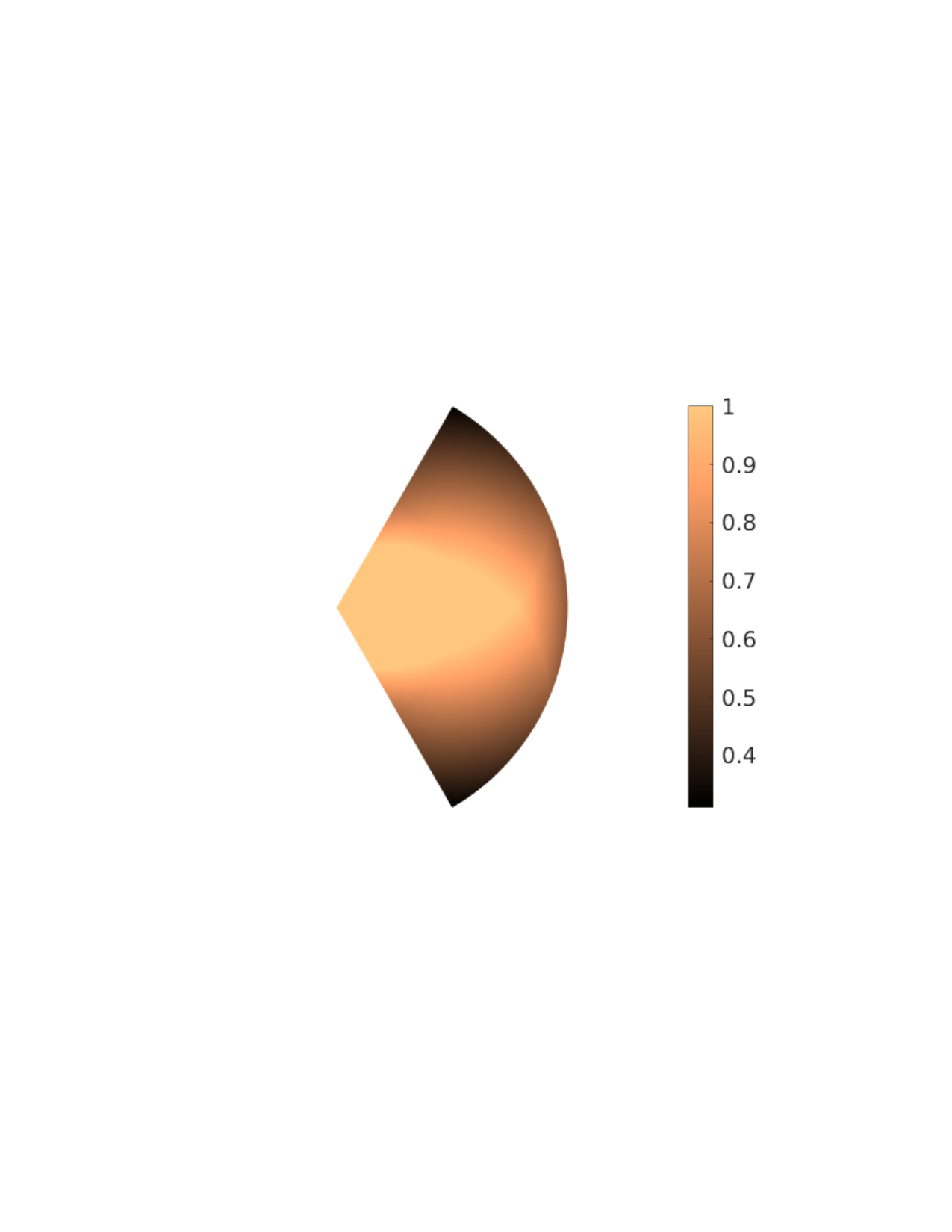}
       \includegraphics[ trim = 140 230 75 225, clip,width=0.5\linewidth]{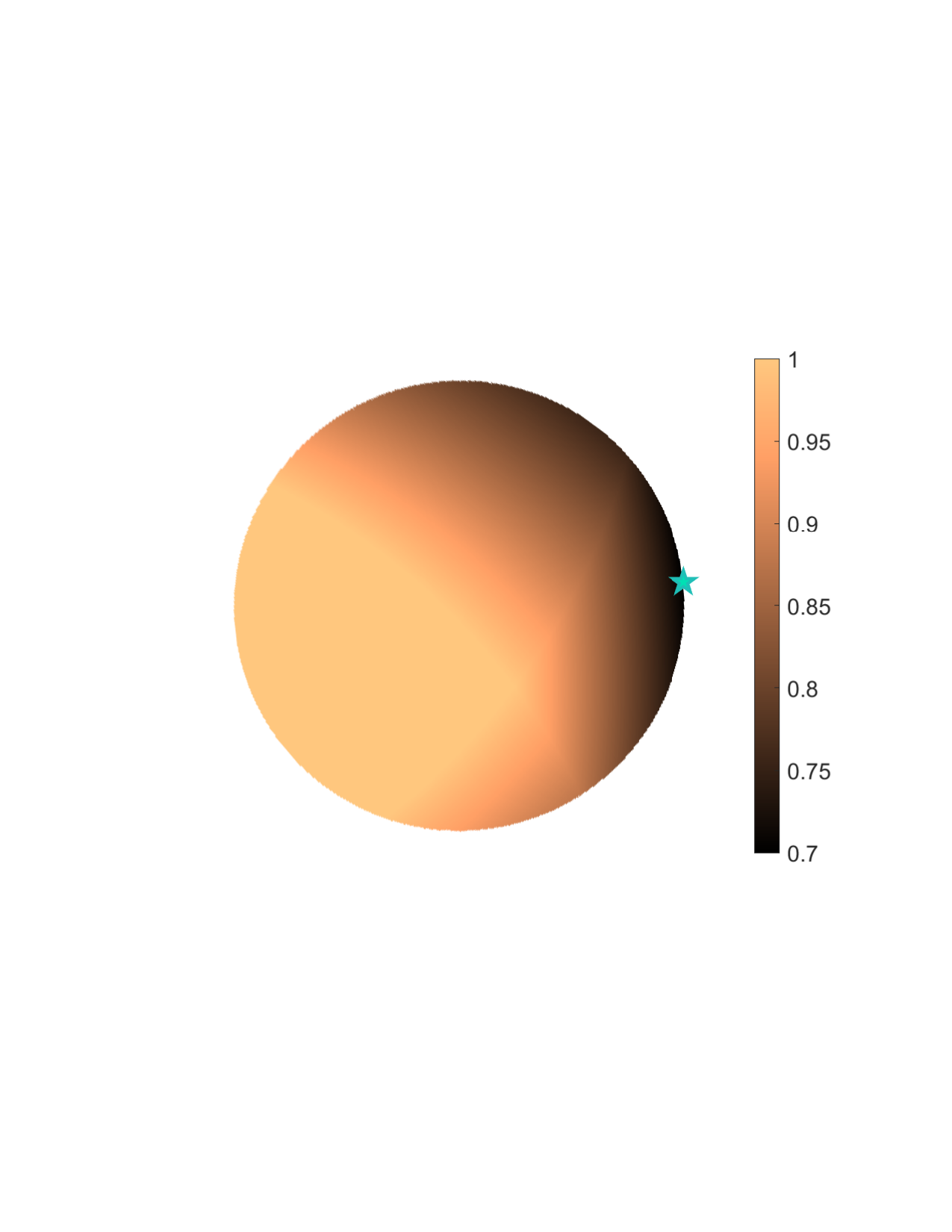}
    \caption{Left: Capture distribution in the game environment.
Right: Capture distribution within an arbitrary AC in the conical environment, with the corresponding capture point marked by a light blue star. }
    \label{fig:capture_distribution}
\end{figure}
 
 \begin{lemma} \label{capture_prob}
Let \( \x = r\uvec(\theta_{\rm D0}) \) denote the defender's location.  
  For a given $r$, by maximizing $\theta_{\max}(\taue,\thetae, r)$ w.r.t $(\taue,\thetae)$ with the constraints \eqref{eq:critical_theta} and \eqref{eq:critical_theta_guard}, we obtain the pair $(\taue^*, \thetae^*)$. For notational brevity, we write \( \theta_{\max}(r) \triangleq \theta_{\max}(\taue^*, \thetae^*, r) \). 
Then, the capture probability is given by
\begin{equation}
\hspace{-1mm}
\begin{aligned}
P(\x) =
\begin{cases}
\frac{\theta_{\max}(r)}{\Phi},
&\!\!\! \text{if } |\theta_{\rm D0}| \le \Phi - \theta_{\max}(r), \\
\frac{\theta_{\max}(r) + \Phi - \theta_{\rm D0}}{2\Phi},
&\!\!\! \text{if } |\theta_{\rm D0}| > \Phi - \theta_{\max}(r).
\end{cases}
\end{aligned}
\end{equation}
\end{lemma}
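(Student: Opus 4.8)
The plan is to reduce $P(\x)$ to a one–dimensional length computation and then evaluate it by splitting on where the defender sits inside the cone. Write $\x = r\uvec(\theta_{\rm D0})$ and recall $P(\x)=\int_{-\Phi}^{\Phi} f(\x,\theta)\,p(\theta)\,\dd\theta$ with $p(\theta)=\tfrac{1}{2\Phi}$; the target center case $r=0$ is already settled ($P(\x)=1$ by Assumption~\ref{assm:small target}), so assume $r>0$.

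First I would argue that the integrand is an indicator, namely
\[
 f(\x,\theta)=1 \iff |\theta-\theta_{\rm D0}|\le\theta_{\max}(r).
\]
Under the simplified strategy of Section~\ref{sec:simplified} each $1$-vs-$1$ game is deterministic once $\theta$ is fixed, so $f(\x,\theta)\in\{0,1\}$ and equals $1$ precisely when capture is guaranteed. For the ``$\Leftarrow$'' direction: if $|\theta-\theta_{\rm D0}|\le\theta_{\max}(r)=\theta_{\max}(\taue^*,\thetae^*,r)$, then $(\taue^*,\thetae^*)\in\Se$ (it satisfies \eqref{eq:critical_theta} and \eqref{eq:critical_theta_guard}, i.e.\ guarding and escape prevention), and Lemma~\ref{lemma:sufficiency} lets the defender reach that configuration without being detected provided $r\le\ro+\rt-\ra$; this hypothesis holds because, by Remark~\ref{remark:capturecone}, every capture location — hence every admissible non-central start of a subsequent game — lies within radius $\ro+2\gamma\ra$, and $\ro+2\gamma\ra\le\ro+\rt-\ra$ is precisely Assumption~\ref{assm:deadlock}. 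Once in $\Se$, Lemma~\ref{lem:arXiv} secures capture. For ``$\Rightarrow$'': by the necessary condition \eqref{Maxanglesep}, reaching \emph{any} $(\taue,\thetae)\in\Se$ forces $|\theta-\theta_{\rm D0}|\le\theta_{\max}(\taue,\thetae,r)\le\theta_{\max}(r)$, so if $|\theta-\theta_{\rm D0}|>\theta_{\max}(r)$ then $\lcap(\xdo,\xao)=\emptyset$, and by the remark following the definition of $\lcap(\xdo,\xao)$ breach is then guaranteed, i.e.\ $f(\x,\theta)=0$.

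With the indicator identified, I would compute
\[
 P(\x)=\frac{1}{2\Phi}\,\mu\!\Big(\big[\theta_{\rm D0}-\theta_{\max}(r),\ \theta_{\rm D0}+\theta_{\max}(r)\big]\cap[-\Phi,\Phi]\Big),
\]
with $\mu$ Lebesgue measure. Since the environment is invariant under $\theta\mapsto-\theta$, $P(\x)$ depends on $\theta_{\rm D0}$ only through $|\theta_{\rm D0}|$, so I may take $\theta_{\rm D0}\ge 0$. Using $\theta_{\max}(r)\le\Phi$, the lower endpoint satisfies $\theta_{\rm D0}-\theta_{\max}(r)\ge-\Phi$, so at most the upper endpoint is truncated: if $\theta_{\rm D0}\le\Phi-\theta_{\max}(r)$ the whole interval lies in $[-\Phi,\Phi]$, of length $2\theta_{\max}(r)$, giving $P(\x)=\theta_{\max}(r)/\Phi$; if $\theta_{\rm D0}>\Phi-\theta_{\max}(r)$ the intersection is $[\theta_{\rm D0}-\theta_{\max}(r),\Phi]$, of length $\Phi+\theta_{\max}(r)-\theta_{\rm D0}$, giving $P(\x)=\tfrac{\theta_{\max}(r)+\Phi-\theta_{\rm D0}}{2\Phi}$. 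Replacing $\theta_{\rm D0}$ by $|\theta_{\rm D0}|$ yields the claimed piecewise formula.

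The main obstacle is the equivalence in the first step: stitching the necessary condition \eqref{Maxanglesep} to the ``breach $\iff\lcap=\emptyset$'' characterization on one side, and Lemma~\ref{lemma:sufficiency} — together with verifying its hypothesis $r\le\ro+\rt-\ra$ via the capture-cone bound and Assumption~\ref{assm:deadlock} — on the other, so that reachability of $\Se$ is \emph{exactly} $|\theta-\theta_{\rm D0}|\le\theta_{\max}(r)$. Two minor points also need care: which branch of the guarding/escape conditions is active — \eqref{eq:critical_theta}/\eqref{eq:critical_theta_guard} when $|\angle\xc(\taue)|\le\Phi$ and \eqref{eq:critical_theta'}/\eqref{eq:critical_theta_guard'} otherwise, so the maximizer $(\taue^*,\thetae^*)$ must be checked against the relevant pair; and the degenerate regime $\theta_{\max}(r)>\Phi$, where both endpoints of the interval may be clipped and $P(\x)$ saturates, which is handled by an obvious modification of the length computation.
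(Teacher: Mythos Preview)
Your proposal is correct and follows essentially the same approach as the paper's proof: identify capturability with the angular condition $|\theta-\theta_{\rm D0}|\le\theta_{\max}(r)$ and then compute the uniform measure of the capturable arc, splitting on whether the arc is clipped by the cone boundary. You are considerably more careful than the paper in justifying the key equivalence (invoking \eqref{Maxanglesep}, Lemma~\ref{lemma:sufficiency}, Remark~\ref{remark:capturecone}, and Assumption~\ref{assm:deadlock}) and in flagging the branch and saturation edge cases, whereas the paper simply asserts the equivalence and reads off the two fractions.
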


\vspace{0.4 cm}
 \begin{proof}
    For a defender located at $\x=r\uvec({\theta_{\rm D0}})$, given that the attackers appear on the TSR boundary independently with uniform probability, if $|\theta_{\rm D0}| \le \Phi-\theta_{\max}(r)$ the capture probability is
    \begin{align*}
        P(\x)= \frac{\theta_{\max}(r)}{\Phi}.
    \end{align*}
    However, if a configuration such as the one in the left subplot of Fig.~\ref{fig:specialcase} occurs where $|\theta_{\rm D0}| > \Phi-\theta_{\max}(r)$, the defender is unable to utilize a part of its $\theta_{\max}$, and in this case the capture probability is
    \begin{align*}
       P(\x)= \frac{\theta_{\max}(r)+\Phi-\theta_{\rm D0}}{2\Phi}.
    \end{align*}
    This completes the proof.
 \end{proof}
 
 Therefore, for a defender located at any point inside the TSR, there is a probability of capturing for the next incoming attacker. We call the collection of these probabilities as \textit{Capture Distribution} which is shown with the cooper color bar in Fig.~\ref{fig:capture_distribution}.
 \par 
 Having derived an analytical expression for the capture probability \( P(\x) \), we are now equipped to solve the max-min game formulation in~\eqref{eq:stackelberg}, where the defender selects an engagement configuration to optimize its long-term performance.

\section{Game Progression} \label{sec:GameProgress}

The first game begins with the defender located at the target center. Due to Assumption~\eqref{assm:small target}, the defender captures the first incoming attacker with probability one. After this capture, let the defender be positioned at $r_1\uvec(\theta_{\rm D_1})$, and the probability of capturing the next attacker becomes $p_1 = P(r_1\uvec(\theta_{\rm D_1}))$.

Note that $r_1$ and $\theta_{\rm D_1}$ depend on the angle of arrival of the first attacker and are therefore random variables. When the next attacker appears on the TSR boundary, it is considered capturable if $|\theta_{\rm D_1} - \theta_{\rm A}| \le \theta_{\max}(r_1)$. Let the capture location for this attacker be $r_2\uvec(\theta_{\rm D_2})$, and therefore the defender is able to capture the next attacker with probability $p_2 = P(r_2\uvec(\theta_{\rm D_2}))$.

On the other hand, if $|\theta_{\rm D_1} - \theta_{\rm A}| > \theta_{\max}(r_1)$, the defender has no strategy to capture the second attacker and instead moves back toward the target center. This is because, under Assumption~\eqref{assm:small target}, the defender can always capture an attacker when starting from the target center.

Each time the defender fails to capture an attacker and returns to the target center to guarantee capture of the next one, we refer to this as a \textit{restart}. The defender’s strategy is described by the following algorithm.

\begin{algorithm} \label{algo}
    \caption{Defender's Strategy}
    \label{euclid}
    \begin{algorithmic}[1] 
    \State Initialize $\xd \gets [0,0]^\intercal$, $N_{\rm capture}\gets 0$,  and $N$
    \For{$n = 1: N$}
    \State $\theta_{\rm A0} \sim {\mathcal U}(-\Phi, \Phi)$ \Comment{Uniform random arrival of attacker}
    \If{$\xd$ is at target center} \Comment{{\color{blue!60}Capture happens}}
       \State $N_{\rm capture} \gets N_{\rm capture} +1$
       \State $\xd \gets \xcap$ \ \ \ \ \ \ \  from \eqref{eq:capture_point} \Comment{$\xd$ after capture}
    \Else
    \If{$|\theta_{\rm D0} - \theta_{\rm A0}|\le \theta_{\max}(r)$} \Comment{{\color{blue!60}Capture happens}}
        \State $N_{\rm capture} \gets N_{\rm capture} +1$
        \State $\xd \gets \xcap$ \ \ \ \  from \eqref{eq:capture_point} \Comment{$\xd$ after capture}
       \Else \Comment{{\color{red!60} Breach happens}}
       \State Defender goes to the target center (restart)
       \State $\xd \gets [0,0]^\intercal$ 
       \EndIf
    \EndIf
    \EndFor
    \end{algorithmic}
\end{algorithm}
\subsection{Non-Convex Conic Problem}
We solve the game for a convex conic environment with angular span $\Phi \le \frac{\pi}{2}$. However, one can also consider the game in a non-convex conic environment where $\tfrac{\pi}{2} <\Phi < \pi$. The solution presented in the following sections extends to this more general setting as well. Specifically, when $|\theta_{\rm D0} - \theta_{\rm A0}| \le \pi$, the defender plays the game exactly as described in the subsequent sections. However, when $|\theta_{\rm D0} - \theta_{\rm A0}| > \pi$, the defender first moves to the target center and resumes the game from there, as illustrated in Fig.~\ref{fig:nonconvex_cone}. A detailed analysis is beyond the scope of this work and is a potential future work.
\begin{figure}
    \centering
    \includegraphics[trim = 40 40 20 20, clip,width = 0.25 \textwidth]{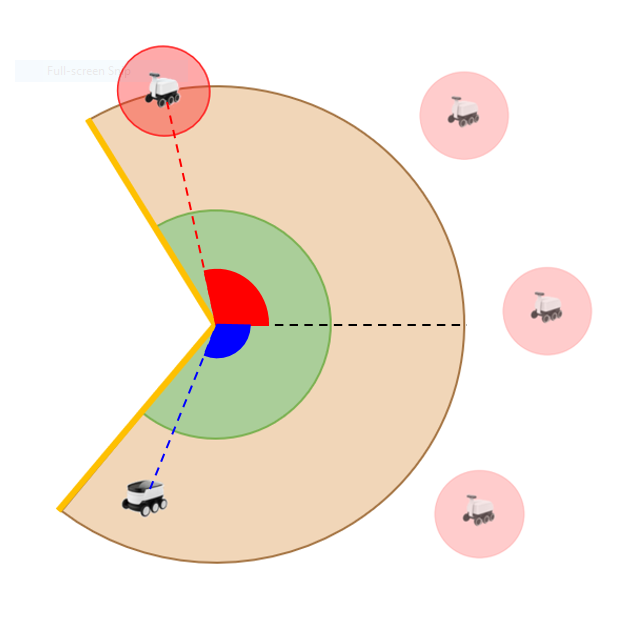}
     \put (-99.2, 23.5) {\line(1,2.5) {12.8}}
  \put (-94, 90 ) {\line(1,-4.5) {7.7}}
   \put(-83,42) {$\color{blue}{\theta_{\rm D0}}$}
  \put(-76,65) {$\re{\theta_{\rm A0}}$}
           \caption{Game progression in a non-convex conic environment when $|\theta_{\rm D0} - \theta_{\rm A0}| > \pi$.}
    \label{fig:nonconvex_cone}
    \vspace{-12 pt}
\end{figure}
\section{Bounds on Capture Percentage} \label{sec:bound}
\begin{figure}  
    \centering
    \includegraphics[trim = 250 140 50 130, clip, width=0.4\linewidth]{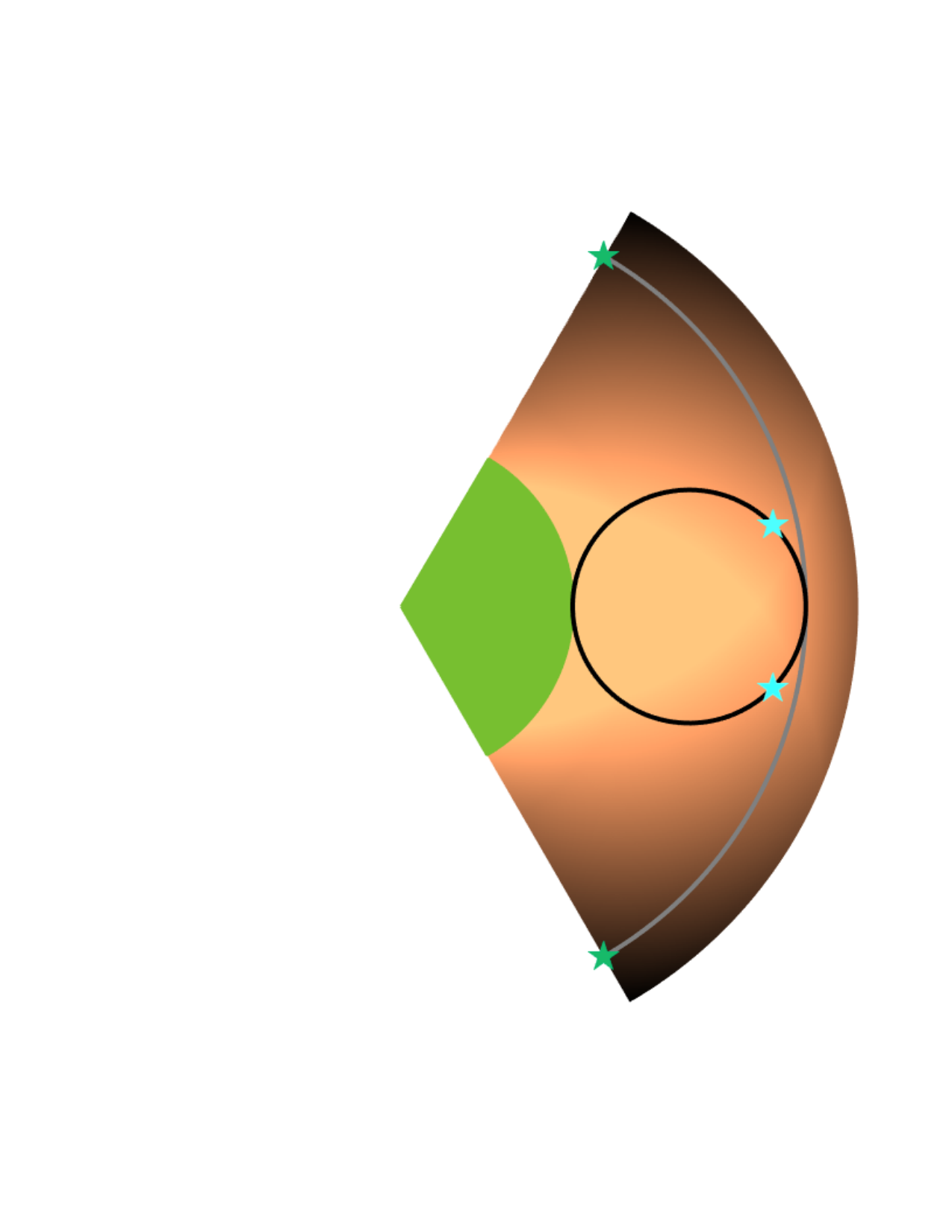}
           \caption{Capture Points. The light blue stars are the capture locations with the highest probability $\x^*$ of capturing the next incoming attacker. Green stars are the points with the lowest probability inside the capture cone.}\label{fig:capture_points}
\end{figure}
We measure the performance of the defender by the capture percentage, i.e., the ratio of the number of attackers captured and the number of attackers who arrived multiplied by $100$. 
In this section, we compute a lower and an upper bound on the capture percentage.
\par
The defender captures the first incoming attacker with probability one. Based on its location after this capture, the defender either captures the second attacker with probability $p_1$, or returns to the target center with probability $1 - p_1$.
If the defender successfully captures the second attacker, it then has probability $p_2$ of capturing the third incoming attacker, or it fails with probability $1 - p_2$ and returns to the target center. This sequence of events resembles a Markov chain over a continuous state space.
However, due to the randomness in the defender’s capture probability after each attacker's arrival, the overall capture percentage becomes analytically and computationally intractable. To address this, we derive analytical lower and upper bounds on the capture percentage by employing corresponding lower and upper bounds on the defender’s capture probability.

\par
 Recall from Remark.~\ref{remark:capturecone}, the capture happens within the \textit{capture cone}.
 However, the capture location depends on the initial location of the defender and the attacker. Therefore, the location of the defender at the end of each attacker's arrival is a random point inside the \textit{capture cone} and subsequently, the corresponding capture probability of the defender is a random variable. 
   \par
   A lower bound on the capture probability is $p^*=P(\x=(\ro+2\gamma\ra)\uvec(\pm \Phi)) $ when the defender is on the boundaries of the \textit{capture cone} which are shown with green stars in
   Fig.~\ref{fig:capture_points}.

 \begin{figure}
     \centering
    { \includegraphics[ angle=90,width=0.13\textwidth]{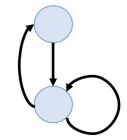}}
     \put(-59, 23) {$S_1$}
      \put(-22, 23) {$S_2$}
      \put(-19, 60) {$p^*$}
      \put(-48, 1) {$1-p^*$}
       \put(-40, 29) {$1$}
      \caption{Lower-bound on capture percentage Markov chain.}
      \label{markovfig}
    \end{figure}
\begin{lemma} \label{lemma:lowerbound_percentage}
    A lower bound on the capture percentage after $N$ attacker's arrival is
    \begin{align}
                {\rm Percentage}(N)\ge \frac{\sum_{i=1}^N \p^\intercal \eta_i}{N}\times 100,
                \label{eq:lowmarkovpercentage}
    \end{align}
        where $\eta_i = [\eta_i(1), \eta_i(2)]^\intercal$ with $\eta_i(j)$ denote the probability that the sate of the Markov chain is at $S_j$ at the end of the $i$-th 1-vs-1 game and $\p=[1,  p^*]$.
\end{lemma}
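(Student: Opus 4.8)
The plan is to set up a two-state Markov chain that under-approximates the true dynamics and then show that the expected number of captures over $N$ games is bounded below by the stationary-type quantity $\sum_{i=1}^N \p^\intercal \eta_i$. First I would formalize the chain: state $S_1$ represents ``defender is at the target center at the start of the current game'' and state $S_2$ represents ``defender is somewhere inside the capture cone but not at the center.'' From $S_1$, capture of the incoming attacker is guaranteed by Assumption~\ref{assm:small target}, after which the defender ends up inside the capture cone (Remark~\ref{remark:capturecone}), so the chain transitions to $S_2$ with probability $1$. From $S_2$, the true capture probability $p_k = P(r_k\uvec(\theta_{\rm D_k}))$ is a random variable depending on the (random) capture location, but by Lemma~\ref{lemma:lowerbound_percentage}'s preceding discussion it is lower bounded by $p^* = P((\ro+2\gamma\ra)\uvec(\pm\Phi))$, the worst point in the capture cone. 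Replacing every $p_k$ by the constant $p^*$ yields the chain in Fig.~\ref{markovfig}: from $S_2$, with probability $p^*$ the attacker is captured (stay in $S_2$) and with probability $1-p^*$ a breach occurs and the defender restarts (go to $S_1$).

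Next I would make the stochastic-domination step precise. I would argue by induction on the game index $i$ that the expected number of captures in the first $i$ games of the \emph{true} process is at least the expected number of captures in the first $i$ games of the \emph{replacement} (constant-$p^*$) process. The key observation is that the capture event in game $i$ has conditional probability $1$ when at the center and conditional probability $p_i \ge p^*$ otherwise, and that lowering the capture probability can only make the defender restart more often, which in turn can only lower future capture probabilities (since being at the center is the best state). A clean way to phrase this is a coupling: run both processes on the same arrival sequence $\{\theta_{\rm A0}^{(i)}\}$ and the same uniform randomizers, and show that whenever the true process is in state $S_2$ the replacement process is either also in $S_2$ or in $S_1$ but never ``ahead'' in a way that helps it; then the indicator of capture in the true process dominates that of the replacement process path-by-path. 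Taking expectations gives $\EX[S(N)] \ge \EX[S_{\rm repl}(N)]$.

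Then I would compute $\EX[S_{\rm repl}(N)]$ exactly in terms of $\eta_i$. By definition $\eta_i(j)$ is the probability the replacement chain sits in $S_j$ at the end of the $i$-th game; a capture occurs in game $i$ precisely when the chain is in $S_1$ at the start of game $i$ (guaranteed capture, contributing $\eta_{i-1}(1)\cdot 1$) or in $S_2$ at the start and the attacker is caught (contributing $\eta_{i-1}(2)\cdot p^*$). Summing, $\EX[S_{\rm repl}(N)] = \sum_{i=1}^N \big(\eta_{i-1}(1) + p^*\eta_{i-1}(2)\big)$, which is exactly $\sum_{i=1}^N \p^\intercal \eta_{i-1}$ with $\p = [1,\,p^*]$; after accounting for the indexing convention used in the lemma (the first game is deterministic, $\eta_1 = [0,1]^\intercal$), this matches $\sum_{i=1}^N \p^\intercal \eta_i$. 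Dividing by $N$ and multiplying by $100$ gives the stated bound.

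The main obstacle is the domination/coupling step: one must be careful that ``replace the random $p_i$ by the constant lower bound $p^*$'' genuinely decreases the expected capture count, because the $p_i$ are not independent of the chain's history — a restart resets the defender favorably, so a path that restarts more (because of smaller capture probability) could conceivably capture the \emph{next} attacker with higher probability. The resolution is that a restart costs exactly one missed capture while only restoring the defender to the already-best state, so the net effect of any extra restart is non-positive for the cumulative count; making this monotonicity rigorous (e.g., via a pathwise coupling argument or by an inductive comparison of the two transition kernels using that $P(\cdot)$ on the capture cone is bounded between $p^*$ and the center value $1$) is where the real work lies. Everything else is bookkeeping with the $2\times 2$ transition matrix.
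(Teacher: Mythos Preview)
Your computational core matches the paper exactly: define a two-state chain with $S_1$ (center) and $S_2$ (capture cone), write the per-step capture indicator as $\mu_1\mathds{1}_{S_1}(s_n)+\mu_2\mathds{1}_{S_2}(s_n)$ with $\EX[\mu_1]=1$ and $\EX[\mu_2]=p^*$, take expectations using independence of $\mu_i$ and $s_n$, and telescope to obtain $\EX[w_N]=\sum_{i=1}^N \p^\intercal\eta_i$. That is precisely what the paper does.

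Where you differ is that you actually flag and attempt the stochastic-domination step --- replacing the random $p_k=P(\xd^{(k)})$ by the constant lower bound $p^*$ and arguing this can only decrease the expected capture count. The paper's proof does \emph{not} carry out this step at all: it simply computes the expected capture count \emph{for the replacement Markov chain} and records that as the right-hand side of the inequality, treating the ``lower bound'' claim as self-evident from the preceding observation that $p^*\le P(\x)$ on the capture cone. So your proposal is in fact more careful than the paper on exactly the point you identify as the main obstacle. Your coupling sketch (run both processes on the same arrival sequence and show the true capture indicator dominates the replacement one pathwise, because an extra restart in the replacement process costs a capture and merely resets to the already-best state) is a reasonable way to close that gap; the paper leaves it implicit.
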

\begin{proof}
    Let $w_n$ denote the total number of captures by the end of the $n$-th game. Let us further define the random variable $s_n \in \{S_1, S_2\}$ to denote the state of the Markov chain  at time $n$; see Fig.~\ref{markovfig}. 
    Let us define the binary random variables $\mu_1$ and $\mu_2$ such that 
    \begin{align*}
        \mu_i = \begin{cases}
            1, \qquad &\text{Attacker is captured from the game state } S_i,\\
            0, &\text{Attacker is not captured }
        \end{cases}
    \end{align*}
    Therefore, we may write
    \begin{align}
    \label{eq:w_n}
          w_{n+1} = w_n + \mu_1 \mathds{1}_{S_1}(s_n) + \mu_2 \mathds{1}_{S_2}(s_n), 
    \end{align}
    where $\mathds{1}_s(\cdot)$ is an indicator function such that $\mathds{1}_s(s') = 1$ if and only if $s=s'$; otherwise, $\mathds{1}_s(s') = 0$.
    Furthermore, since the attackers appear independently of the game state state, $\mu_i$ and $s_n$ are independent random variables. 
    Taking expectations on both sides of \eqref{eq:w_n}, we obtain
    \begin{align} \label{eq:expected_w_n}
    \varmathbb{E}[w_{N+1}] = \varmathbb{E}[w_N] + \eta_{N}(1) +  p^* \eta_N(2),
\end{align}
where $\eta_n(j)$ denotes the probability that the sate of the Markov chain $(s_n)$ is at $S_j$ at the end of the $n$-th game. By defining the vectors $\p=[1, p^*]^\intercal$ and $\eta_n=[\eta_N(1),\eta_N(2)]^\intercal$, we may rewrite \eqref{eq:expected_w_n} as
\begin{align}
    \varmathbb{E}[w_{n+1}] = \varmathbb{E}[w_n] + \p^\intercal \eta_n=\varmathbb{E}[w_0]+\sum_{i=1}^n \p^\intercal \eta_i = \!\sum_{i=1}^n \p^\intercal \eta_i,
\end{align}
where $w_0=0$. Therefore, the expected percentage of capture at the end of the $n$-th game is,
    \begin{align*}
        {\rm percentage}(n)=\frac{\varmathbb{E}[c_n]}{n}\times 100= \frac{\sum_{i=1}^n \p^\intercal \eta_i}{n}\times 100.
    \end{align*}
    This completes the proof.
 \end{proof}
An upper bound on the capture probability is given by \( q^* = P(\x^*) \), where the corresponding AC is located in the middle of the TSR, i.e., \( \xc(\taue) = (\ro + \gamma\ra)\uvec(0) \), and \( \x^* \) is the solution to~\eqref{eq:capture_point}. These solutions are depicted as light blue stars in Fig.~\ref{fig:capture_points}.

\begin{lemma}
    An upper bound on the capture percentage after $N$ attacker's arrival is
    \begin{align}
                {\rm Percentage}(N)\le \frac{\sum_{i=1}^N \q^\intercal \eta_i}{N}\times 100,
                \label{eq:highmarkovpercentage}
    \end{align}
        where $\eta_i = [\eta_i(1), \eta_i(2)]^\intercal$ with $\eta_i(j)$ denote the probability that the sate of the Markov chain is at $S_j$ at the end of the $i$-th 1-vs-1 game and $\q=[1,  q^*]$.
\end{lemma}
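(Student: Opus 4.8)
The plan is to mirror the proof of Lemma~\ref{lemma:lowerbound_percentage} essentially verbatim, replacing the lower-bound capture probability $p^*$ with the upper-bound capture probability $q^* = P(\x^*)$, and justifying why $q^*$ is indeed an upper bound on the defender's true per-game capture probability whenever the defender is not at the target center. The key observation is that the Markov chain structure in Fig.~\ref{markovfig} is exactly the same: state $S_1$ corresponds to the defender being at the target center (capture with probability $1$), and state $S_2$ corresponds to the defender being somewhere inside the capture cone after a successful capture, at which point the \emph{actual} capture probability $P(\x)$ for the next attacker is a random variable bounded above by $q^*$ and below by $p^*$. Replacing the true transition probability out of $S_2$ by $q^*$ stochastically dominates the real process, hence yields an upper bound on the expected number of captures.

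First I would set up the same notation: let $w_n$ be the number of captures by the end of the $n$-th game, let $s_n \in \{S_1, S_2\}$ be the chain state, and let $\mu_1, \mu_2$ be the binary capture indicators as in the previous proof, so that $w_{n+1} = w_n + \mu_1 \mathds{1}_{S_1}(s_n) + \mu_2 \mathds{1}_{S_2}(s_n)$. Next I would note that in state $S_2$ the true conditional capture probability satisfies $\EX[\mu_2 \mid s_n = S_2] = \EX[P(\xd)] \le q^*$ by the definition of $q^* = P(\x^*)$ as the maximum of $P$ over the capture cone (the light blue stars in Fig.~\ref{fig:capture_points} being the maximizers of the capture distribution restricted to the capture cone), combined with Remark~\ref{remark:capturecone} which guarantees that every capture location lies inside the capture cone. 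Taking expectations and using independence of the $\mu_i$ from $s_n$ gives $\EX[w_{N+1}] \le \EX[w_N] + \eta_N(1) + q^* \eta_N(2)$, which telescopes (with $w_0 = 0$) to $\EX[w_N] \le \sum_{i=1}^N \q^\intercal \eta_i$ with $\q = [1, q^*]^\intercal$. Dividing by $N$ and multiplying by $100$ yields \eqref{eq:highmarkovpercentage}.

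One subtlety I would be careful about is that the chain's state-occupation probabilities $\eta_i$ are, strictly speaking, defined with respect to the \emph{bounding} chain (the one with transition probability $q^*$ out of $S_2$), not the true process; I would argue, as is implicit in the lower-bound lemma, that the probability of being in $S_1$ versus $S_2$ is governed purely by whether the previous game ended in capture or restart, and that using a uniform $q^*$ (resp.\ $p^*$) in place of the state-dependent $P(\xd)$ gives a well-defined Markov chain whose capture count dominates (resp.\ is dominated by) the true one in expectation — a coupling/monotonicity argument. The main obstacle, and the one place the proof is not purely mechanical, is making this domination rigorous: one must check that replacing the random $P(\xd) \in [p^*, q^*]$ by the constant $q^*$ does not decrease $\EX[w_N]$ for \emph{any} $N$, which requires that a higher capture probability at stage $n$ never hurts future expected captures. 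Since returning to the target center (state $S_1$) is the ``best'' state and capture from $S_2$ can only move the defender to a state at least as good as a generic capture-cone point, the monotonicity holds, but I would state this as the key lemma-level justification rather than leave it implicit.
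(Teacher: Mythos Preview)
Your proposal is correct and takes essentially the same approach as the paper: the paper's entire proof is the single sentence ``The proof of this lemma is similar to Lemma~\ref{lemma:lowerbound_percentage},'' i.e., mirror the lower-bound argument with $q^*$ in place of $p^*$. Your additional care about the coupling/monotonicity subtlety actually goes beyond what the paper provides (the paper writes the recursion as an equality for the bounding chain and leaves the domination implicit), so your version is, if anything, more complete.
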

\begin{proof}
    The proof of this lemma is similar to Lemma~\ref{lemma:lowerbound_percentage}.
\end{proof}


\section{Simulation Results} \label{sec:Simu}
\subsection{Monte-Carlo Simulation}
We simulate the game with the following parameters $\ro =6$, $\ra = 1,~ \rt = 8$,  $\nu = 0.85$, and $\Phi=\frac{\pi}{3} $. 
Under this parametric choice, we conducted $100$ random trials of the game.
In each trial, we considered a sequence of $200$ incoming attackers that are uniform randomly generated on the TSR boundary. The percentage of capture for each trial is plotted in Fig.~\ref{fig:capture_percentage}. 
\begin{figure}
    \centering
        \includegraphics[trim = 85 245 110 220, clip, width=0.75\linewidth]{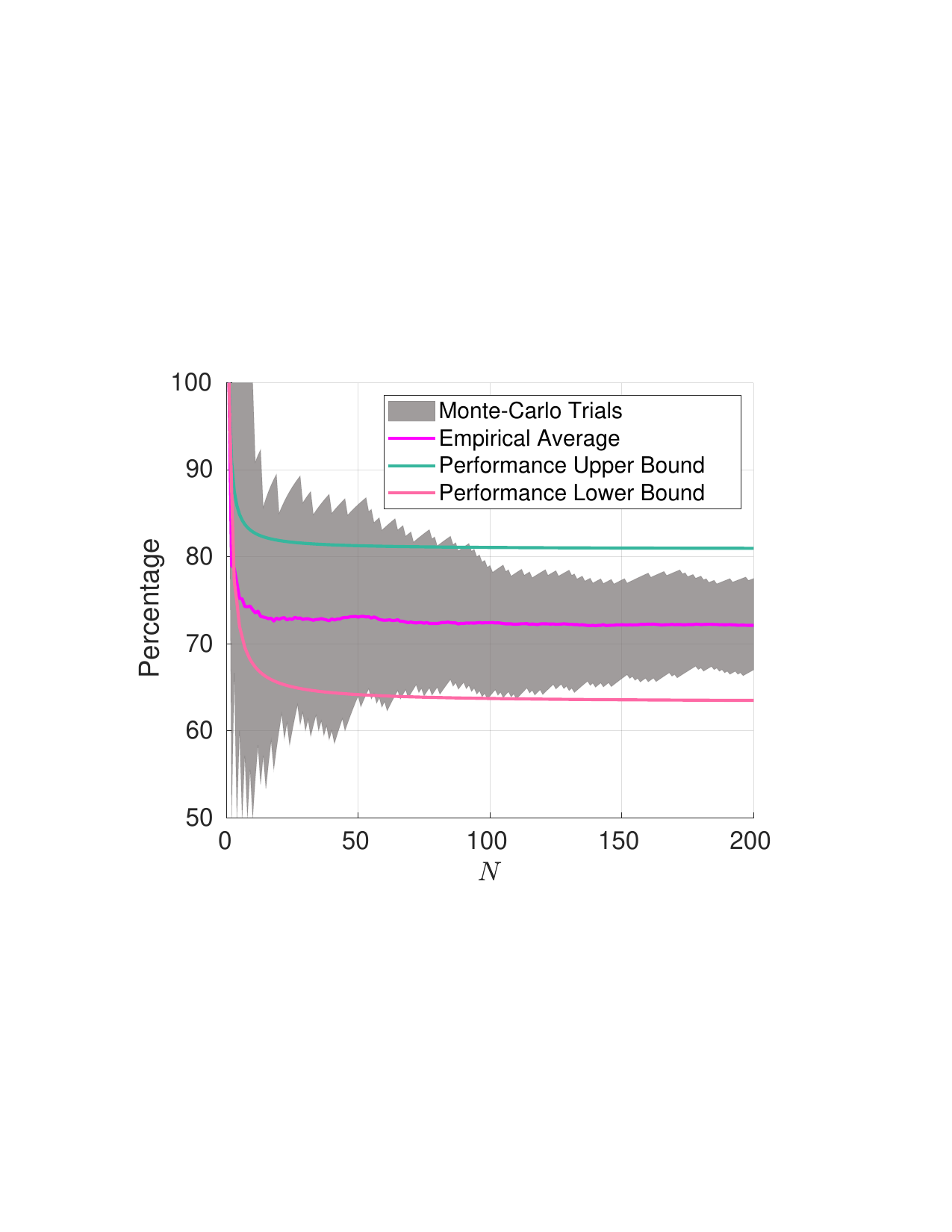}
           \caption{Percentage of Capture versus number of arrivals from 100 trials. Each trial is represented with a gray line. 
    Their empirical average is represented by the magenta line.
    The pink and teal lines represent the theoretically predicted lower/upper-bounds on the capture percentage in \eqref{eq:lowmarkovpercentage} and \eqref{eq:highmarkovpercentage}.}
    \label{fig:capture_percentage}
\end{figure}
The abscissa in this figure represents the number of arrivals ($N$), while the ordinate represents the percentage of captures, ${\rm Percentage}(N)$, corresponding to that number of arrivals. We compute the empirical mean of the capture percentage  from these random trials, and that is shown by the magenta line in Fig.~\ref{fig:capture_percentage}.
To compare this simulation result with our theoretical bounds, we plot the lower and the upper bound on the capture percentage in the same figure using pink and teal lines, respectively. We observe that the empirical average from the trials converges after a few sequences of attacker arrivals and by the end of the $200$ sequences the capture percentage is equal to $72.11$. The theoretical lower and upper bounds from \eqref{eq:lowmarkovpercentage} and \eqref{eq:highmarkovpercentage}, with the chosen parameters are $63.48$ and $80.96$, respectively. The empirical average remains within the theoretical bounds, validating our analysis. The narrow bound gap shows the model’s accuracy, and the rapid convergence suggests that performance can be predicted with few observations, enabling timely decisions.
\begin{figure}
    \centering
    \includegraphics[trim = 85 220 110 220, clip, width=0.75\linewidth]{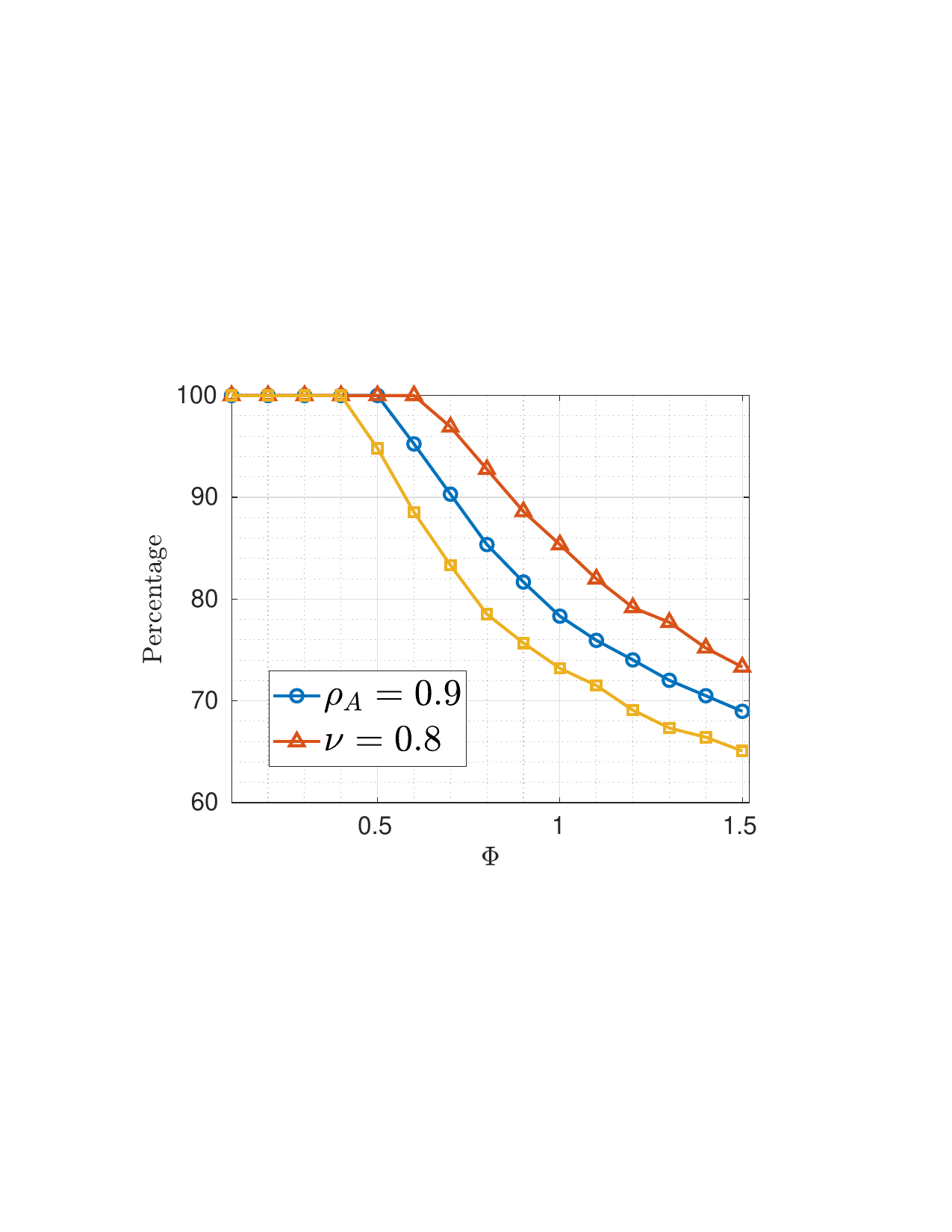}
    \caption{The capture percentage is plotted against $\Phi$. The yellow curve  represents the following parameters: $\ro = 6$, $\ra = 1$, $\rt = 8$, and $\nu = 0.85$. The capture percentage is plotted with blue and red lines when $\ra=0.9$, and $\nu=0.8$, respectively.}
    \label{fig:angle_variation}
\end{figure}
\subsection{Parameter Variation}
In the next simulation, we vary $\Phi$ and report the empirical mean of the capture percentage. The abscissa in Fig.~\ref{fig:angle_variation} denotes the conical environment angle ($\Phi$) and the ordinate denotes the percentage of capture (i.e., ${\rm Percentage}$). For each angle $\Phi$ we again conducted $100$ trials of the game where in each trial a sequence of $200$ incoming attackers were tasked to attack the target. First, we simulate the game using the following parameters: $\ro =6$, $\ra = 1,~ \rt = 8$, and $\nu = 0.85$.  For each value of $\Phi$, the empirical average of the capture percentage from the conducted trials is reported, and they are all shown with a yellow line.  In the next two simulations, we keep the environment parameters (i.e., $\ro=6$, $\rt=8$) fixed and will vary the parameters associated with the attackers' capability (i.e., $\ra$, $\nu$). First, we fix $\nu=0.85$ and consider $\ra=0.9$ and then we fix $\ra=1$ and consider $\nu=0.8$. The empirical average of the capture percentage from the same experiment setup is indicated by blue and red lines when $\ra=0.9$ and $\nu=0.8$, respectively.
It is evident that the capture percentage decreases as $\Phi$ increases. This occurs because the defender's probability of capturing the next incoming attacker decreases with higher values of $\Phi$.
When $\Phi$ is small, the initial angular separation between the defender and attacker after each capture is more likely to be less than $\theta_{\max}$ and the defender can capture almost all the incoming attackers. 
 Decreasing $\ra$ and $\nu$ has increased the capture percentage which was expected from \eqref{Maxanglesep}. However, from the simulation, the defender captured more attackers by varying $\nu$ for any $\Phi$.
\subsection{Algorithm Comparison}

In this set of simulations, we evaluate the performance of our proposed approach in comparison with the state-of-the-art Pure-Pursuit strategy. 

We conduct four experiments, each corresponding to a different combination of defender and attacker strategies:
\begin{itemize}
    \item Our Defender – Our Attacker
    \item PP (Pure Pursuer, i.e., always moves directly toward the attacker’s current position) Defender – Our Attacker
    \item PP Defender – PE (Pure Evader, i.e., moves toward the target center until senses the defender and moves directly away from the opponent’s current position) Attacker
    \item Our Defender – PE Attacker
\end{itemize}

In each experiment, we run 100 independent trials of the game. In every trial, the defender faces a sequence of 200 incoming attackers attempting to reach the target. The percentage of capture for each trial is plotted in Fig.~\ref{fig:algo_compare}. Similar to Fig.~\ref{fig:capture_percentage},

From the simulation results, it is evident that the PP defender struggles in both the experiments, showing significantly lower capture rates. In particular, when the PP defender faces the PE attacker, its performance drops sharply and converges at a very low success rate—indicating poor adaptability to evasive behaviors.

On the other hand, our proposed defender strategy consistently outperforms the PP defender. When paired with our attacker, the \textit{Our Defender – Our Attacker} configuration maintains a high and stable success rate even as the number of attackers increases. The \textit{Our Defender – PE Attacker} configuration also performs robustly, showing that our approach can effectively counter even highly evasive attackers.

These findings clearly illustrate that the Pure Pursuit approach is insufficient for dealing with a large number of dynamic and potentially evasive attackers. In contrast, our strategy provides significantly higher capture rates and better long-term defensive performance. These results indicate that the pairing of our defender and attacker strategies exhibits equilibrium-like behavior: any unilateral deviation from the prescribed strategy leads to a decrease in the deviating agent’s performance.

 \begin{figure}
    \centering
    \includegraphics[trim = 80 225 110 230, clip, width=0.75\linewidth]{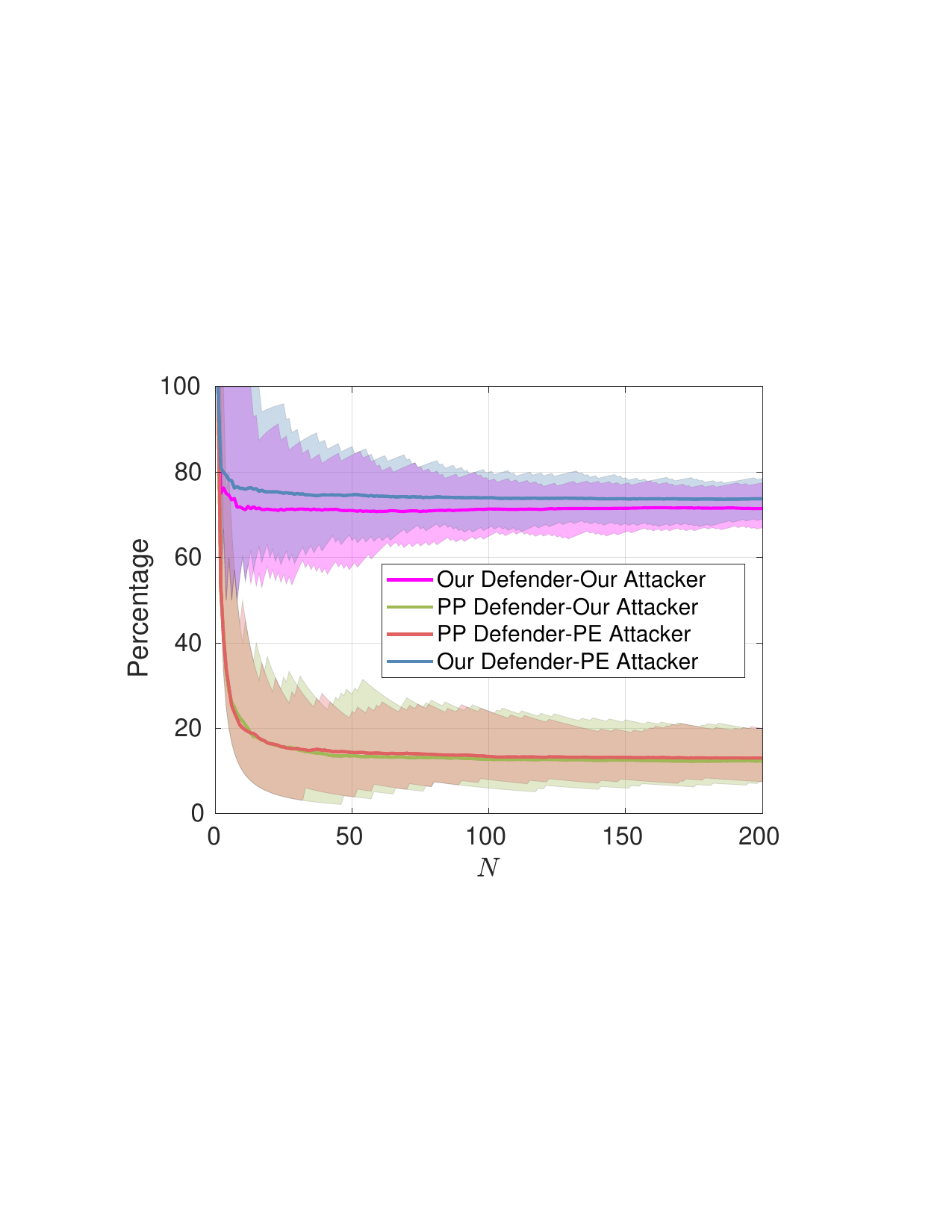}
           \caption{Percentage of Capture versus number of arrivals from 100 trials. }
    \label{fig:algo_compare}
\end{figure}
\section{Conclusions} \label{sec:Conclusion}
In this paper, we formulated a perimeter defense game against a sequence of incoming attackers in a  conical environment. The edges of the conical environment are protected by impenetrable obstacles and the attackers are tasked to move radially toward the target center the circumference of the conic environment and breach the target while the defenders are tasked to capture as many
attackers as possible. Based on the initial configuration of the agents and the information
available to them, we discussed the agents' strategies using the notions of \textit{capture distribution}, and \textit{min-max strategy}. We analytically computed bounds on the performance of the defender. The simulation results closely align with the derived theoretical bounds, with the empirical capture rates consistently falling between the lower and upper limits. 
\par
A natural extension of this work would be to consider different attacker arrival patterns (e.g., periodical arrivals, non-uniform probability of arrival locations, multiple simultaneous arrivals). Another extension of this work would be to consider guarding moving targets and the problem where the defenders are also equipped with their own sensing ranges.





\begin{IEEEbiography}
[{\includegraphics[width=1in,height=1.25in,clip,keepaspectratio]{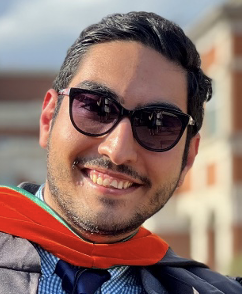}}]
{Arman Pourghorban}
received the B.E. degree in electrical engineering from Isfahan
University of Technology, Iran, in 2020. He is currently a Ph.D. candidate with the Department
of Electrical and Computer Engineering, University of
North Carolina at Charlotte, Charlotte, NC, USA, and also works as a Control Engineer at Tesla, Inc.
His research interests include cooperative control, multi-agent systems, and differential
games.
\end{IEEEbiography}

\begin{IEEEbiography}
[{\includegraphics[width=1in,height=1.25in,clip,keepaspectratio]{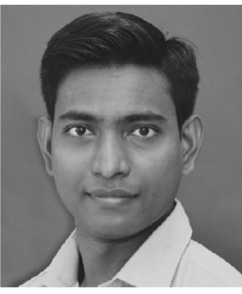}}]
{Dipankar Maity}
(Senior Member, IEEE), received the B.E. degree in electronics and telecommunication engineering from Jadavpur
University, India, in 2013, and the Ph.D. degree in electrical and computer engineering from the University of  
Maryland, College Park, MD, USA in 2018.
He is an Assistant Professor with the Department
of Electrical and Computer Engineering, University of
North Carolina at Charlotte, Charlotte, NC, USA. He
was a Postdoctoral Fellow with the Georgia Institute
of Technology, Atlanta, GA, USA. During his Ph.D., he
was a Visiting Scholar at the Technical University of
Munich (TUM) and at the KTH Royal Institute of Technology, Stockholm, Sweden.
His research interests include temporal logic-based controller synthesis, control
under communication constraints, intermittent-feedback control, multi-agent
systems, stochastic games, and integration of these ideas in the context of
cyber–physical systems.
\end{IEEEbiography}


\begin{thebibliography}{[34]}
\setcounter{enumiv}{0}
\bibitem{lewin1979conic}Lewin, J and Olsder, GJ
\newblock Conic surveillance evasion
\newblock \emph{Journal of Optimization Theory and Applications}, vol. 27, pp. 107--125, 1979.
\bibitem{melikyan1991simple}Melikyan, AA and Ovakimyan, NV
\newblock A simple pursuit-and-evasion game on a two-dimensional cone
\newblock \emph{Journal of Applied Mathematics and Mechanics}, vol. 55, pp. 607--618, 1991.

\bibitem{bajaj2024multi}Bajaj, S., Bopardikar, S. D., Torng, E., Von Moll, A., and Casbeer, D. W.
\newblock Multi-vehicle perimeter defense in conical environments
\newblock \emph{IEEE Transactions on Robotics}, vol. 40, pp. 1439-1456, 2024.
\bibitem{lee2022vision}Lee, E. S., Loianno, G., Jayaraman, D., and Kumar, V.
\newblock Vision-based perimeter defense via multiview pose estimation
\newblock \emph{arXiv preprint arXiv:2209.12136}, 2022.
\bibitem{ho2022game}Ho, E., Rajagopalan, A., Skvortsov, A., Arulampalam, S., and Piraveenan, M.
\newblock Game theory in defence applications: A review
\newblock \emph{Sensors}, vol. 22, no. 3, pp. 1032, 2022.
\bibitem{garcia2020barrier}Garcia, E., Casbeer, D. W., and Pachter, M.
\newblock The barrier surface in the cooperative football differential game
\newblock \emph{arXiv preprint arXiv:2006.03682}, 2020.

\bibitem{eloycoop}Garcia, E., Casbeer, D. W., Pham, K., and Pachter, M.
\newblock Cooperative aircraft defense from an attacking missile
\newblock \emph{Proceedings of the 53rd Conference on Decision and Control}, pp. 2926--2931, 2014.
\bibitem{s2020airport}Lykou, G., Moustakas, D., and Gritzalis, D.
\newblock Defending airports from UAS: A survey on cyber-attacks and counter-drone sensing technologies
\newblock \emph{Sensors}, vol. 20, no. 12, pp. 3537, 2020.
\bibitem{hoorfar2023securing}Hoorfar, H., Fathi, F., Largani, S. M., and Bagheri, A.
\newblock Securing pathways with orthogonal robots
\newblock \emph{arXiv preprint arXiv:2308.10093}, 2023.
\bibitem{isaacs1999differential}Isaacs, R.
\newblock Differential games I, II, III, IV
\newblock RAND Corporation Research Memorandum RM-1391, 1399, 1411, 1468, 1954--1956.


\bibitem{velhal2022decentralized}Velhal, S., Sundaram, S., and Sundararajan, N.
\newblock A decentralized multirobot spatiotemporal multitask assignment approach for perimeter defense
\newblock \emph{IEEE Transactions on Robotics}, vol. 38, no. 5, pp. 3085--3096, 2022.

\bibitem{lee2021guarding}Lee, Y., and Bakolas, E.
\newblock Guarding a convex target set from an attacker in Euclidean spaces
\newblock \emph{IEEE Control Systems Letters}, vol. 6, pp. 1706--1711, 2021.

\bibitem{chen2016multiplayer}Chen, M., Zhou, Z., and Tomlin, C. J.
\newblock Multiplayer reach-avoid games via pairwise outcomes
\newblock \emph{IEEE Transactions on Automatic Control}, vol. 62, no. 3, pp. 1451--1457, 2016.

\bibitem{garcia2019strategies}Garcia, E., Von Moll, A., Casbeer, D. W., and Pachter, M.
\newblock Strategies for defending a coastline against multiple attackers
\newblock \emph{Proceedings of the 58th IEEE Conference on Decision and Control}, pp. 7319--7324, 2019.
\bibitem{chen2014path}Chen, M., Zhou, Z., and Tomlin, C. J.
\newblock A path defense approach to the multiplayer reach-avoid game
\newblock \emph{Proceedings of the 53rd IEEE Conference on Decision and Control}, pp. 2420--2426, 2014.

\bibitem{festa2013decomposition}Festa, A., and Vinter, R. B.
\newblock A decomposition technique for pursuit-evasion games with many pursuers
\newblock \emph{Proceedings of the 52nd IEEE Conference on Decision and Control}, pp. 5797--5802, 2013.

 \bibitem{selvakumar2016evasion}Selvakumar, J., and Bakolas, E.
 \newblock Evasion from a group of pursuers with a prescribed target set for the evader
 \newblock \emph{Proceedings of the American Control Conference}, pp. 155--160, 2016.
 

\bibitem{sun2017pursuit}Sun, W., Tsiotras, P., Lolla, T., Subramani, D. N., and Lermusiaux, P. F. J.
\newblock Pursuit-evasion games in dynamic flow fields via reachability set analysis
\newblock \emph{Proceedings of the American Control Conference}, pp. 4595--4600, 2017.
\bibitem{liang2023collaborative}Liang, X., Zhou, B., Jiang, L., Meng, G., and Xiu, Y.
\newblock Collaborative pursuit-evasion game of multi-UAVs based on Apollonius circle in the environment with obstacle
\newblock \emph{Connection Science}, vol. 35, no. 1, pp. 2168253, 2023.

\bibitem{dorothy2021one}Dorothy, M., Maity, D., Shishika, D., and Von Moll, A.
\newblock One Apollonius circle is enough for many pursuit-evasion games
\newblock \emph{Automatica}, vol. 163, pp. 111587, 2024.


\bibitem{khrenov2021geometric}Khrenov, M., Rivera-Ortiz, P., and Diaz-Mercado, Y.
\newblock Geometric feasibility for defense manifold maintenance in planar reach-avoid games against a fast evader
\newblock \emph{IFAC-PapersOnLine}, vol. 54, no. 20, pp. 807--813, 2021.

\bibitem{deng2023multiple}Deng, R., Zhang, W., Yan, R., Shi, Z., and Zhong, Y.
\newblock Multiple-pursuer single-evader reach-avoid games in constant flow fields
\newblock \emph{IEEE Transactions on Automatic Control},  vol. 69, no. 3, pp.1789-1795, 2023.
\bibitem{wang2024target}Wang, K., Zhou, S., Yao, Y., Sun, Q., and Wang, Y.
\newblock A target defence-intrusion game with considering the obstructive effect of target
\newblock \emph{IET Control Theory \& Applications}, vol. 18, no. 18, pp.2660-2674, 2024.

\bibitem{bajaj2022competitive}Bajaj, S., Torng, E., Bopardikar, S. D., Von Moll, A., Weintraub, I., Garcia, E., and Casbeer, D. W.
\newblock Competitive perimeter defense of conical environments
\newblock \emph{Proceedings of the 61st Conference on Decision and Control}, pp. 6586--6593, 2022.

\bibitem{adler2022rolenew}Adler, A., Mickelin, O., Ramachandran, R. K., Sukhatme, G. S., and Karaman, S.
\newblock The role of heterogeneity in autonomous perimeter defense problems
\newblock \emph{Algorithmic Foundations of Robotics XV}, pp. 115--131, 2023.

\bibitem{macharet2020adaptive}Macharet, D. G., Chen, A. K., Shishika, D., Pappas, G. J., and Kumar, V.
\newblock Adaptive partitioning for coordinated multi-agent perimeter defense
\newblock \emph{Proceedings of the International Conference on Intelligent Robots and Systems}, pp. 7971--7977, 2020.







\bibitem{pourghorban2022target}Pourghorban, A., Dorothy, M., Shishika, D., Von Moll, A., and Maity, D.
\newblock Target defense against sequentially arriving intruders
\newblock \emph{Proceedings of the 61st Conference on Decision and Control}, pp. 6594--6601, 2022.

\bibitem{pourghorban2023target}Pourghorban, A., and Maity, D.
\newblock Target defense against periodically arriving intruders
\newblock \emph{Proceedings of the American Control Conference}, pp. 1673--1679, 2023.



\bibitem{pourghorban2023targetnew}Pourghorban, A., and Maity, D.
\newblock Target defense against a sequentially arriving cooperative intruder team
\newblock \emph{Proceedings of the Open Architecture/Open Business Model Net-Centric Systems and Defense Transformation}, vol. 12544, pp. 65--77, 2023.

\bibitem{shishika2021partial}Shishika, D., Maity, D., and Dorothy, M.
\newblock Partial information target defense game
\newblock \emph{Proceedings of the International Conference on Robotics and Automation}, pp. 8111--8117, 2021.


























\bibitem{von2010market}Von Stackelberg, H.
\newblock Market structure and equilibrium
\newblock Springer Science \& Business Media, 2010.

\end{thebibliography}
\end{document}